\newcommand{\imag}{\mathrm{i}}
\DeclareMathOperator{\poly}{poly}
\DeclareMathOperator{\rec}{rec}
\DeclareMathOperator{\tra}{tra}
\DeclareMathOperator{\dup}{dup}
\DeclareMathOperator{\parity}{par}
\newcommand{\LES}{\mathrm{LES}}
\newcommand{\SIL}{\mathrm{SIL}}
\newcommand{\NSIL}{\mathrm{NSIL}}
\newcommand{\sort}{\mathrm{SORT}}
\newcommand{\revsort}{\mathrm{REVSORT}}
\newcommand{\unsort}{\mathrm{UNSORT}}
\newcommand{\shuffle}{\mathrm{SHUFFLE}}
\newcommand{\unshuffle}{\mathrm{UNSHUFFLE}}
\newcommand{\OT}{\tilde{O}}
\newtheorem{theorem}{Theorem}
\newtheorem{definition}{Definition}
\newtheorem{result}{Result}
\crefname{protocol}{Protocol}{Protocols}
\Crefname{protocol}{Protocol}{Protocols}
\newlength{\protowidth}
\begin{document}

\title{Low-depth quantum symmetrization}

\author{Zhenning Liu, Andrew M. Childs, and Daniel Gottesman} 

\affiliation{Department of Computer Science, Institute for Advanced Computer Studies, and \\
Joint Center for Quantum Information and Computer Science, University of Maryland, College Park, MD 20742, USA
}

\begin{abstract}
Quantum symmetrization is the task of transforming a non-strictly increasing list of $n$ integers into an equal superposition of all permutations of the list (or more generally, performing this operation coherently on a superposition of such lists). This task plays a key role in initial state preparation for first-quantized simulations. Motivated by an application to fermionic systems, various algorithms have been proposed to solve a weaker version of symmetrization in which the input list is \emph{strictly} increasing, but the general symmetrization problem with repetitions in the input list has not been well studied. 
We present the first efficient quantum algorithms for the general symmetrization problem. If $m$ is the greatest possible value of the input list, our first algorithm symmetrizes any single classical input list using $\tilde{O}(\log n)$ depth and $O(n\log n + \log m)$ ancilla qubits, and our second algorithm symmetrizes an arbitrary superposition of input lists using $\tilde{O}(\log^3 n)$ depth and $O(n\log n)$ ancilla qubits. Our algorithms enable efficient simulation of bosonic quantum systems in first quantization and can prepare (superpositions of) Dicke states of any Hamming weight in $\tilde{O}(\log n)$ depth (respectively, $\tilde{O}(\log^3 n)$ depth) using $O(n\log n)$ ancilla qubits. We also propose an $\tilde{O}(\log^3 n)$-depth quantum algorithm to transform second-quantized states to first-quantized states. Using this algorithm, QFT-based quantum telescope arrays can image brighter photon sources, extending quantum interferometric imaging systems to a new regime.
\end{abstract}

\maketitle

\section{Introduction}\label{sec:introduction}
Symmetry is a central concept in physics. 
In particular, symmetry plays a key role in quantum simulation \cite{Feynman_1982}, where a faithful simulation of the many-body dynamics of indistinguishable particles should respect the symmetry of the simulated system. For \emph{fermions}, due to the Pauli exclusion principle, no two particles can occupy the same mode, and the wave function must be \emph{antisymmetrized} with respect to permutations of particles, leading to Fermi-Dirac statistics. For \emph{bosons}, multiple particles can be in the same state, and the wave function must be \emph{symmetrized} with respect to permutations, leading to Bose-Einstein statistics. A general quantum simulation procedure contains not only Hamiltonian evolution and final output, but also initial state preparation, which is often a challenging step, especially when the initial wave function has some specific structure or symmetry.

In this paper, we study a simple but subtle problem in quantum computing: transforming a list $l$ of \emph{non-strictly increasing} integers stored in a quantum register into the equal superposition of all permutations of $l$. We call this ``symmetrization," since the output state is symmetric with respect to any permutation of $l$. Symmetrization plays a fundamental role in quantum simulation based on \emph{first quantization}, an encoding of identical particles as a list of particle locations (whereas \emph{second quantization} represents the state as a list of occupation numbers of particle locations). Eigenstates in first quantization must be symmetric or antisymmetric if the system is bosonic or fermionic, respectively. Due to significant interest in quantum chemistry (which involves interactions between fermions), a weaker version of (anti)symmetrization where the list is \emph{strictly increasing} has been extensively studied \cite{abrams1997simulation,ward2009preparation,berry2018improved}. In particular, work of \textcite{berry2018improved} develops a logarithmic-depth algorithm. However, first-quantized simulation of bosons must allow multiple particles to be in the same mode, so the input list may contain repeated integers. Although several previous papers have claimed to handle this more general case \cite{abrams1997simulation,ward2009preparation,su2021fault,chan2023grid,kosugi2023first,georges2024quantum}, close examination reveals a subtlety that has not been properly addressed in the literature. The only solution for the general symmetrization problem that we are aware of is a result of \textcite{nepomechie2023qudit}. However, this algorithm requires prohibitively deep circuits: the depth is $O(n^m)$, where $n$ is the number of integers and $m$ is the greatest of these integers.

As the main results of this paper, we propose new logarithmic-depth quantum algorithms for symmetrization and a polylogarithmic-depth quantum algorithm for transforming from second to first quantization. We also present three applications of these algorithms. The first application, mentioned above, is \emph{initial state preparation in first-quantized simulation of bosons}, an unrecognized unsolved problem before our work. The second application is preparing a specific class of symmetric states useful in a variety of quantum information processing tasks, \emph{Dicke states}, where our algorithm is the first \footnote{We are aware of concurrent work that also gives an algorithm for preparing Dicke states with polylogarithmic depth but uses only polylogarithmically many ancillas \cite{jeffery}. The approaches are very different: while we use sorting networks, the other approach applies a simple sequence of collective rotations and parity measurements. Our approach solves a more general symmetrization problem, while the other approach is simpler, likely performs better in practice, and can perform better when the Hamming weight is lower.} to achieve logarithmic total circuit depth using a reasonable number of ancilla qubits. The third application is in optics, where we use our algorithm to improve strategies for long-baseline quantum telescope arrays \cite{gottesman2012longer,khabiboulline2019quantum} to process multiple photons at a time, yielding a \emph{quantum interferometric imaging system} that can take images in more general scenarios than previous designs.

In the remainder of this introductory section, we define the symmetrization problem formally in \cref{subsec:setup}. Then, we briefly state our results in \cref{subsec:ourresult}. Finally, in \cref{subsec:appli}, we introduce the three applications of symmetrization in more detail.

We present background for our results in \cref{sec:prelim}. More specifically, in \cref{subsec:sortingnetwork}, we introduce sorting networks and their quantization as the building blocks of symmetrization. Then in \cref{subsec:silsymmetrization}, we describe the algorithm of \textcite{berry2018improved}, which is used as a subroutine of one of our symmetrization algorithms. In \cref{subsec:reversible}, we generalize sorting to different comparison rules and define reversible sorting networks. In \cref{subsec:prefix}, we present the quantization of a standard prefix sum computation procedure.

In \cref{sec:nsilsymmetrize}, we discuss the general symmetrization problem with repetitions. We first explain in \cref{subsec:issues} why previous methods \cite{abrams1997simulation,ward2009preparation,berry2018improved} are not applicable to the more general symmetrization problem by discussing the subgroup structure of permutations when considering repetitions in the permuted list, and then extend quantum sorting networks to work with repetitions in \cref{subsec:qsn_nsil}. Using these theoretical tools, we present our new algorithm for symmetrizing a single input list in \cref{subsec:nsil_single}.

We present our symmetrization algorithm for an arbitrary superposition of input lists in \cref{subsec:nsil_superposed}. We present a novel SIL symmetrization procedure in \cref{appendix:newsil} based on quantization of a classical parallel algorithm from \textcite{alonso1996parallel}, which is subsequently used as a subroutine in \cref{appendix:exactnsil} to design the NSIL symmetrization algorithm for superposed inputs.

In \cref{sec:binaryencoder}, we propose a polylogarithmic-depth converter from a second quantized representation to a sorted list of modes occupied by the particles. This binary encoder, together with the symmetrization algorithm, gives the first polylogarithmic-depth conversion from second quantization to first quantization.

As a topic of independent interest, we demonstrate our design of a quantum interferometric imaging system in \cref{sec:inter}, including a pedagogical introduction of quantum telescope arrays that process one photon a time \cite{gottesman2012longer,khabiboulline2019quantum} in \cref{subsec:singlephoton}, followed by our extension to multiple-photon processing (using the conversion from second to first quantization) in \cref{subsec:multiplephoton}.

Finally, we summarize our results and discuss potential future works in \cref{sec:summary}.

\subsection{Problem setup}
\label{subsec:setup}
We design an $\OT(\log n)$-depth quantum algorithm for symmetrizing general increasing lists that may include repeated elements, which we call \emph{non-strictly increasing lists} (NSILs). Here we define $\OT(f(n)) \coloneqq O(f(n) \log(f(n)))$. Note that we will use this notation with $f(n)$ scaling like $\log n$, in which case the $\OT$ notation neglects a factor of $\log\log n$.

Recall that this symmetrization problem has been solved for \emph{strictly increasing lists} (SILs), which do not have any repetition. We formally define such lists as follows.

\begin{definition}[$\NSIL_n$ and $\SIL_n$]
For any nonnegative integer $n$,
    \begin{equation}
    \begin{aligned}
        \NSIL_n &:= \{ l\in\mathbb{Z}^n \mid \forall i<j, l_i\leq l_j\}\\
        \SIL_n &:= \{l\in\mathbb{Z}^n \mid \forall i<j, l_i < l_j\}
    \end{aligned}
    \end{equation}
are the sets of length-$n$ NSILs and SILs, respectively.

If the $l_i$s are restricted to be non-negative integers upper bounded by $m \in \mathbb{N}$, then we have
\begin{equation}
    \begin{aligned}
        \NSIL_n^m &:= \{l\in\{0,1,\dots,m\}^n \mid \forall i<j, l_i\leq l_j\}\\
        \SIL_n^m &:= \{l\in\{0,1,\dots,m\}^n \mid \forall i<j, l_i < l_j\}.
    \end{aligned}    
\end{equation}

\end{definition}

The input state for symmetrization is either $\ket{l}$ where $l$ is an NSIL of size $n$, or more generally, a superposition of such lists, i.e., 
\begin{equation}
    \sum_{l\in\NSIL_{n}} \alpha_l \ket{l}
\end{equation}
where $\sum_l |\alpha_l|^2 = 1$.

The output state of symmetrization for $\ket{l}$ is
\begin{equation}
\ket{\psi_\mathrm{out}(l)} \propto \sum_{\sigma \in S_n}  \ket{\sigma(l)}
\end{equation}
where $S_n$ is the group of permutations of $n$ elements. Similarly, for superposed inputs, the output state is proportional to
\begin{equation}
\sum_{l \in \NSIL_n} \alpha_l \ket{\psi_\mathrm{out}(l)}.
\end{equation}

\subsection{Our results}
\label{subsec:ourresult}
We provide two NSIL symmetrization algorithms, one for single NSIL inputs and another for superposed NSIL inputs. Both algorithms use SIL symmetrization algorithms as subroutines.

The first result uses a modified version of the \textcite{berry2018improved} algorithm and is designed for a single NSIL input. It takes $\OT(\log n)$ depth.

\begin{result}[$\OT(\log n)$-depth single-input NSIL symmetrization, \cref{subsec:nsil_single}]
\label{result:1}
For all $a\geq 2$, there exists an $\OT(\log n)$-depth algorithm using $O(a n \log n + \log m)$ ancilla qubits that produces $\ket{\psi_\mathrm{out}(l)}$ for any $l\in\NSIL_n$ with success probability $1-O(n^{-a+2})$.
\end{result}

Our second algorithm has slightly greater depth ($\OT(\log^3 n)$) but works for arbitrary superpositions of inputs $\sum_{l\in\NSIL_n}\alpha_l\ket{l}$ and always succeeds. It uses a novel SIL symmetrization algorithm as a subroutine, which is based on quantization of a classical parallel algorithm for generating random permutations \cite{alonso1996parallel}.

\begin{result}[$\OT(\log^3 n)$-depth superposed-input NSIL symmetrization, \cref{subsec:nsil_superposed}]
\label{result:2}
    There exists an $\OT(\log^3 n)$-depth NSIL symmetrization algorithm using $O(n \log n)$ ancilla qubits that produces $\sum_{l\in\NSIL_n}\alpha_l \ket{\psi_\mathrm{out}(l)}$ for any input state $\sum_{l\in\NSIL_n} \ket{\psi_\mathrm{out}(l)}$ with success probability $1$.
\end{result}

We also propose an $\OT(\log^2 n)$-depth algorithm to convert a list of occupation numbers to an $\NSIL$ of the occupied modes. Combining this algorithm and \cref{result:2}, we introduce the first polylogarithmic-depth quantum algorithm to transform a second-quantized state to a first-quantized state. Since this algorithm is a unitary, one can also transform from a first-quantized state to a second-quantized state by running it backward.

\begin{result}[$\OT(\log^3 n)$-depth conversion between second quantization and first quantization, \cref{sec:binaryencoder}]
\label{result:3}
    There exists an $\OT(\log^3 n)$-depth algorithm using $O(n \log n)$ ancilla qubits that transforms a state of $n$ particles in $m$ modes from a second-quantized representation to a first-quantized representation, or from a first-quantized representation to a second-quantized representation.
\end{result}

\subsection{Applications}
\label{subsec:appli}
\subsubsection{Bosonic eigenstate preparation}

For systems of $n$ indistinguishable particles and $m$ possible modes, second quantization uses $O(m\log n)$ qubits, while first quantization uses $O(n\log m)$ qubits, which is more favorable when $m\gg n$. However, for first quantization, the system symmetry must be reflected by the state, i.e., if two indistinguishable particles are exchanged, the state must be the same but with a $\pm 1$ phase depending on the particle type. Therefore, first-quantized eigenstates must be symmetrized or anti-symmetrized for bosons or fermions, respectively. Unlike for fermions, initial state preparation for first-quantized bosonic systems has not been extensively studied, even though it is a fundamental simulation primitive.

We review the history of initial state preparation for first-quantized bosonic systems. \textcite{abrams1997simulation} mainly considered Fermi systems and described a $\poly(n)$-depth symmetrization algorithm for SILs, which is improved by \textcite{ward2009preparation}. Although both papers claimed to be applicable for bosons, we explain why this is not the case in \cref{subsec:issues,subsec:qsn_nsil}. More recently, \textcite{berry2018improved} proposed the first $\OT(\log n)$-depth SIL symmetrization algorithm, as discussed in \cref{subsec:silsymmetrization,subsec:qsn_nsil}. Several papers proposed using these three algorithms to simulate bosons, without noticing the subtlety caused by repeated items. This includes the recent work of \textcite{su2021fault} which analyzed resource requirements for quantum chemistry simulation in first quantization and claimed that their algorithms also apply to bosons. Furthermore, \textcite{chan2023grid,georges2024quantum} discussed quantum chemistry simulation and suggested that initial state preparation for bosons can be done in a similar way as for fermions.

There are also several other recent theoretical developments in first-quantized simulation of bosons that do not discuss initial state preparation, including \textcite{berry2023quantum}, which discussed simulation of materials using the strategy in \textcite{su2021fault}; and \textcite{mukhopadhyay2024quantum}, which studies simulation of Pauli-Fierz Hamiltonians. Similarly, \textcite{tong2022provably} discussed simulation of gauge theories and bosonic systems, but the authors did not consider the computational cost of the initial state preparation.

As the most straightforward application of \cref{result:1,result:2}, the NSIL symmetrization algorithms enable efficient initial state preparation for first-quantized bosonic simulation. Furthermore, as mentioned in \cite{ward2009preparation}, one can also prepare a first-quantized initial state by preparing a second-quantized initial state and transforming it to its first-quantized counterpart. Our \cref{result:3} implements this transformation in poly-logarithmic depth. This idea is particularly useful if the second-quantized state is easier to prepare than the superposition of NSILs.

Although there are currently fewer proposals for first-quantized simulation of bosons than of fermions, our algorithm unlocks future possibilities in this direction if first quantization is shown to be more efficient in, for example, simulation of Bose-Hubbard models \cite{kuwahara2024effective} or nuclear effective theory \cite{watson2023quantum}.

\begin{table*}[t]
    \centering
    \begin{tabular}{|c| c|c| c| c|} 
     \hline \hline
     \textbf{Name} & \textbf{Depth} & \textbf{Repetitions} & \textbf{Ancillas} & \textbf{Connectivity} \\
     \hline
     \textcite{bartschi2019deterministic} & $O(n)$ & $1$& $0$ & LNN\\
     \hline 
     \textcite{bartschi2022short} & $O(n)$ & $1$ &$0$&LNN and all-to-all\\
     \hline
     \textcite{buhrman2023state} & $O(\log n)$ & $1$ & $\Omega(n^2)$ & LNN\\
     \hline
    \textcite{piroli2024approximating} & $O(1)$ & $O(\sqrt{n})$ & $O(n\log n)$ & LNN\\
     \hline
    \textcite{piroli2024approximating} & $O(\log n)$ & $O(\sqrt{n})$ & $O(n)$ & LNN\\
    \hline
     \cref{result:1} & $\OT(\log n)$ & $O(1)$ & $O(n\log n)$ & All-to-all\\
     \hline
     \cref{result:2} & $\OT(\log^3 n)$ & $1$ & $O(n\log n)$ & All-to-all\\
     \hline \hline
    \end{tabular}
    \caption{Comparison of Dicke state preparation algorithms for Hamming weight $k=O(n)$. Note that the algorithm of \textcite{piroli2024approximating} must be repeated $O(\sqrt{n})$ times, which implies either a larger number of ancilla qubits ($O(n^{3/2} \cdot \poly(\log n))$) or polynomial total circuit depth. 
    }
    \label{tab:dicke_comp}
\end{table*}

\subsubsection{Dicke state preparation}

The Dicke state $\ket{D_{n}^k}$ \cite{dicke1954coherence} is the equal superposition of all $n$-qubit computational-basis states with Hamming weight $k$, i.e.,
\begin{equation}
    \ket{D_n^k} := \frac{1}{\sqrt{\binom{n}{k}}} \sum_{r\in \{0,1\}^n, |r|=k} \ket{r}.
\end{equation}
Dicke states play an important role in various areas of quantum information science, including quantum networks \cite{prevedel2009experimental}, quantum metrology \cite{toth2012multipartite}, quantum error correction \cite{ouyang2014permutation}, and the quantum approximate optimization algorithm \cite{hadfield2019quantum}. Therefore, the quantum information community has sought circuits for preparing Dicke states with lower depth, gate complexity, and number of ancilla qubits. The most challenging Dicke states to prepare are those with $k$ close to $n/2$.
Linear-depth circuits for preparing these Dicke states were proposed in Refs.~\cite{bartschi2019deterministic,bartschi2022short}, but the first $\poly(\log n)$-depth circuits were proposed by \textcite{buhrman2023state}. These circuits use a model of ``local alternating quantum classical computations" in which the quantum operations performed are determined by the measurement outcomes of the previous step. However, the number of ancilla qubits required in that algorithm is huge ($\Omega(n^2)$) for large $n$. \textcite{piroli2024approximating} significantly reduce the number of ancilla qubits and circuit depth to $O(n\log n)$ and $O(1)$, respectively, or $O(n)$ and $O(\log n)$, respectively. However, the circuit in Ref.~\cite{piroli2024approximating} needs to be repeated $O(\sqrt{n})$ times in total, which implies either a larger number of ancilla qubits $O(n^{3/2} \cdot \poly(\log n))$ or polynomial total circuit depth.

Our low-depth NSIL symmetrization algorithm gives a straightforward strategy for preparing arbitrary Dicke states with $\OT(\log n)$ depth and $O(n\log n)$ ancilla qubits, with the cost of increased circuit connectivity. To prepare $\ket{D_n^k}$, it suffices to symmetrize
\begin{equation}
    \ket{\psi_{n,k}} := \ket{0}^{\otimes n-k} \otimes \ket{1}^{\otimes k}
\end{equation}
using our algorithm in \cref{subsec:nsil_single}, since the initial state is only one NSIL.

Superpositions of different Dicke states (with different $k$ values) are also useful in quantum metrology. For example, \textcite{lin2024covariantquantumerrorcorrectingcodes} show that metrological entanglement advantage can be achieved with such states. We can prepare such states in $\OT(\log^3 n)$ depth by first initializing the state as a superposition of different $\ket{\psi_{n,k}}$ states and then applying the full NSIL symmetrization algorithm described in \cref{subsec:nsil_superposed}. While some Dicke state preparation circuits can straightforwardly prepare superpositions of Dicke states\cite{bartschi2019deterministic,bartschi2022short}, this is not the case for previous polylogarithmic-depth circuits \cite{buhrman2023state,piroli2024approximating}, which may need to be modified and might incur extra overhead.

We summarize the comparisons of our algorithm with a selection of other Dicke state preparation circuits 
in \cref{tab:dicke_comp}.

\subsubsection{Quantum interferometric imaging}
As a promising candidate to achieve microarcsecond resolution in astronomical observation, long-baseline telescopes based on quantum repeaters were first proposed in \textcite{gottesman2012longer} and have sparked increasing research interest in both the astronomy and quantum information communities. In follow-up work, \textcite{khabiboulline2019quantum} design a \emph{quantum telescope array} and employ the quantum Fourier transform (QFT) to improve the signal-to-noise ratio. However, their method can only handle the case where the number of photons per mode is restricted to be much less than $1$ (i.e., the object must be extremely dim). Using our NSIL symmetrization algorithm, we extend the QFT-based scheme to cases where some modes have more than 1 photon. Since symmetrization is an intermediate step of interferometric imaging, we postpone the detailed description of this application to \cref{sec:inter}.

\section{Preliminaries}
\label{sec:prelim}

\subsection{(Quantum) sorting networks for SILs}
\label{subsec:sortingnetwork}
A sorting network is a circuit of \emph{comparators} that can be used to implement parallel comparison-based sorting algorithms. A classical comparator between elements $a_i$ and $a_j$ with $i< j$ swaps the values of $a_i$ and $a_j$ if $a_i> a_j$, and otherwise leaves them unchanged. For instance, a sorting network for 3 elements, corresponding to the well-known bubble sort algorithm, is shown in \cref{fig:bubble}.
Comparators can be quantized by adding an ancilla bit, as shown in \cref{fig:qcomp}. This construction implements a unitary comparator $U_\mathrm{comp}$ acting as
\begin{equation}
    U_\mathrm{comp} \ket{a,b,0} = \ket{\min(a,b),\max(a,b),\delta[a > b]}
    \label{eq:comp}
\end{equation}
where $\delta[a>b]$ is $1$ if $a>b$ and $0$ otherwise.

Note that if the integer is in $[m]$, then each $a$ or $b$ can be represented with $\log m$ bits. The comparator unitary for $\log m$ qubits can be implemented in $O(\log\log m)$ depth, as shown in Appendix B.2 of Ref.~\cite{berry2018improved}. When analyzing the circuit depth, we ignore this double-logarithmic overhead for simplicity.

Using the unitary comparator \eqref{eq:comp}, any sorting algorithm based on a sorting network of $N_a$ comparators can be directly quantized with $N_a$ ancilla qubits. Each ancilla qubit stores the result of the comparison of its corresponding comparator. In this paper, we combine all such ancilla qubits in a single register, and call its state after performing the sorting algorithm a \emph{record}. For $l\in \SIL_n$ and $\sigma\in S_n$, the record obtained when sorting $\sigma(l)$ is denoted $\rec(\sigma)$. Note that $\rec(\sigma)$ is independent of $l$ since for any $l,l' \in \SIL_n$, the record of sorting $\sigma(l)$ is the same as the record of sorting $\sigma(l')$. We write $\sort_{i,j}$ to represent the quantized sorting network that sorts a permuted SIL in register $i$ and stores the record in register $j$. Its operation is simply
\begin{equation}
    \sort_{i,j} \ket{\sigma(l)}_i\ket{0}_j = \ket{l}_i\ket{\rec(\sigma)}_j.
\end{equation}
As this is a linear operation, it can also be applied to a superposition of input states.

Apart from $\sort_{i,j}$, we make use of several other quantum operations based on quantum sorting networks. By running all gates of $\sort$ (which are either comparisons or cSWAPs) in the opposite direction, one can implement $\unsort_{i,j}$, the inverse of $\sort_{i,j}$, which acts as
\begin{equation}
    \unsort_{i,j} \ket{l}_i\ket{\rec(\sigma)}_j = \ket{\sigma(l)}_i\ket{0}_j.
\end{equation}
Furthermore, if we remove all comparison gates and only keep the cSWAP gates in $\unsort$, we obtain an operation denoted $\shuffle_{i,j}$, which applies the permutation stored in $\rec(\tau)$ to the input list:
\begin{equation}
    \shuffle_{i,j} \ket{\sigma(l)}_i\ket{\rec(\tau)}_j = \ket{\tau \circ \sigma(l)}_i\ket{\rec(\tau)}_j.
\end{equation}
Here the input list is not necessarily an SIL, so it is written as $\sigma(l)$. Since no comparison is performed in $\shuffle$, it has no dependence on the comparison rule.

Notice that $\sort$ essentially applies $\sigma^{-1}$ to $\sigma(l)$ to obtain $l$ using the information in the record generated by the comparison gates. If a record of $\tau$ is given and only the cSWAP gates in $\sort$ are performed, we get $\unshuffle$, the inverse operation of $\shuffle$:
\begin{equation}
    \unshuffle_{i,j} \ket{\sigma(l)}_i\ket{\rec(\tau)}_j = \ket{\tau^{-1}\circ \sigma (l)}_i\ket{\rec(\tau)}_j.
\end{equation}

Multiple sorting networks with sub-linear depth have been proposed, including the well-known bitonic sort \cite{batcher1968sorting} and merge sort \cite{stone1971parallel}, both featuring $O(\log^2 n)$ depth for a list of $n$ elements (and $N_a = O(n\log^2 n)$ ancilla qubits when quantized). As far as we are aware, the only known $O(\log n)$-depth sorting network is the AKS algorithm proposed in Ref.~\cite{ajtai19830}. However, this algorithm is impractical due to its large constant factor.

\begin{figure}[t]
\centering
\begin{quantikz}
\lstick{$a$} &\swap{1} &\qw  &\swap{1} &\qw\\
\lstick{$b$} &\targX{} &\swap{1} &\targX{} &\qw\\
\lstick{$c$} &\qw &\targX{} &\qw &\qw
\end{quantikz}
\caption{Bubble sort implemented by classical comparators.}
\label{fig:bubble}
\end{figure}
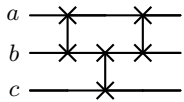

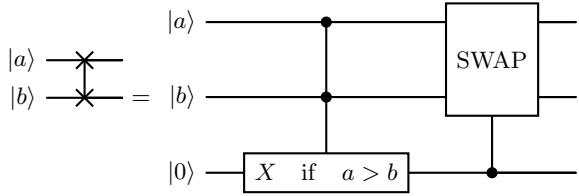
\begin{figure}[t]
\centering
\begin{quantikz}
\lstick{$\ket{a}$} &\swap{1} &\qw \\
\lstick{$\ket{b}$} &\targX{} &\qw \\
\end{quantikz}
=
\begin{quantikz}
\lstick{$\ket{a}$} &\ctrl{1} &\gate[2]{\mathrm{SWAP}} &\qw \\
\lstick{$\ket{b}$} &\ctrl{1} & &\qw\\
\lstick{$\ket{0}$} &\gate{X\quad \mathrm{if}\quad a> b} &\ctrl{-1} &\qw
\end{quantikz}
\caption{A quantum comparator consists of a comparison operator and a controlled-SWAP gate. An ancilla qubit is used to store the result of comparison and ensure the reversibility of the operation.}
\label{fig:qcomp}
\end{figure}

In this paper, we employ the AKS algorithm to design our symmetrization algorithms, but we emphasize that all sorting networks we use can be replaced by bitonic or merge sort if the number of elements to be sorted is not too large. While the asymptotic depth using bitonic or merge sort will have another factor of $\log n$, the performance may be better in practice. Since all four operations ($\sort,\unsort,\shuffle,\unshuffle$) only use as many gates as running a quantized sorting network, they can all be performed in $\OT(\log n)$ depth.

\subsection{Berry \emph{et al.} SIL symmetrization algorithms}
\label{subsec:silsymmetrization}
Symmetrization and antisymmetrization algorithms are fundamental building blocks of first-quantized quantum simulation for bosonic and fermionic systems. \textcite{abrams1997simulation} proposed the first quantum (anti-)symmetrization algorithm. This procedure has $\poly(n)$ depth. \textcite{berry2018improved} improved their algorithm by introducing $\log(n)$-depth sorting networks and a $\log(n)$-depth method for generating random permutations. However, both algorithms are only applicable to SIL inputs, although this was not explicitly pointed out in their papers. In this subsection, we briefly introduce the algorithm of \textcite{berry2018improved}, as it is a building block of our NSIL symmetrization algorithm.

Let $l\in \SIL_n^m$ be the input list. The algorithm uses 3 registers, and begins with the state
\begin{equation}
    \ket{l}_1 \ket{0\dots 0}_2 \ket{0\dots 0}_3.
\end{equation}
Register 2 is designed to hold a list of $n$ non-negative integers, where each integer is upper bounded by a pre-selected number $f(n)\gg n^2$. Therefore register 2 consists of $O(n\log n)$ qubits. Register 3 is the record register and also has $O(n\log n)$ qubits.

The algorithm starts by transforming the state in register 2 from $\ket{00\dots 0}$ into the product state
\begin{equation}
\begin{aligned}
\ket{\phi_2} &\propto \sum_{r_1=0}^{f(n)} \ket{r_1} \otimes \sum_{r_2=0}^{f(n)} \ket{r_2} \otimes \dots \otimes \sum_{r_n=0}^{f(n)} \ket{r_n}\\
&= \sum_{r_i\in[f(n)] ~ \forall i\in[n]} \ket{r_1 r_2 \dots r_n}.
\end{aligned}
\end{equation}
Write $r:=(r_1,r_2,\dots,r_n)$. Note that we can divide the $r$s into 2 categories:
\begin{itemize}
    \item $r$ is non-repetitive if $r_i\neq r_j ~\forall i\neq j$;
    \item $r$ is repetitive if $\exists i,j\in[n],\,i \ne j$ such that $r_i=r_j$.
\end{itemize}
The probability of $r$ being repetitive is at most $O(n^{-a+2})$ if each $r_i$ is uniformly randomly sampled from $[f(n)]$ and $f(n)\gg n^{a}$. Hence if we choose $a\geq 3$, then the probability of $r$ being repetitive is negligible. Moreover, if $r$ is non-repetitive, then there must be $\sigma\in S_n$ and $l'\in \SIL_n$ such that $r = \sigma(l')$. Therefore, we can rewrite $\ket{\phi_2}$ as
\begin{equation}
    \ket{\phi_2} \propto \sum_{l'\in \SIL_n^m} \sum_{\sigma \in S_n} \ket{\sigma(l')} + \ket{\perp}
\end{equation}
where $\ket{\perp}$ is the superposition of repetitive $r$s. Note that \textcite{berry2018improved} include a measure-and-postselect step to remove the repetitive part. We can also choose to use $a\geq 3$ and keep the repetitive part since it only leads to a negligible error.

Next, sort register 2 using the $\OT(\log n)$-depth sorting network and store the record in register 3:
\begin{equation}
\begin{aligned}
    &\ket{l}_1 \ket{\phi_2}_2 \ket{0\dots 0}_3 \\
    \stackrel{\sort_{2,3}}{\longrightarrow} & \ket{l}_1 \left( \sum_{l'\in \SIL_n^m} \ket{l'}_2  \sum_{\sigma\in S_n} \ket{\rec(\sigma)}_3  +\ket{\perp}_{23} \right).
\end{aligned}
\end{equation}
Now the 3 registers are almost independent. If register 2 is discarded, then with high probability ($1-O(n^{-a+2})$) the remaining part is $\ket{l}_1 \sum_{\sigma \in S_n} \ket{\rec(\sigma)}_3$ and the desired symmetrized state can be produced using $\unsort$:
\begin{equation}
 \ket{l}_1 \sum_{\sigma \in S_n} \ket{\rec(\sigma)}_3 \stackrel{\unsort_{1,3}}{\longrightarrow} \sum_{\sigma\in S_n} \ket{\sigma(l)}_1 \ket{00\dots 0}_3.
\end{equation}
In fact, the last step of the \textcite{abrams1997simulation} algorithm is also unsorting the input list using the superposition of all records in $S_n$, but the $\unsort$ operation is implemented based on $\poly(n)$-depth sorting algorithms.

\subsection{(Quantum) reversible sorting}
\label{subsec:reversible}
Even though sorting is an irreversible operation, we can define a notion of ``reversible sorting" as a transformation between lists that are promised to be sorted in different ways. First, we define a \emph{comparison rule} $\Gamma$ as a function
\begin{equation}
    \Gamma\colon \mathbb{R}^k \times \mathbb{R}^k \rightarrow \{-1,0,1\}
\end{equation}
satisfying
\begin{equation}
    \Gamma(x,y) = 0 \quad \mathrm{iff} \quad x=y
\end{equation}
and
\begin{equation}
    \Gamma(x,y) = 1 \quad \mathrm{AND} \quad \Gamma(y,z)=1 \implies \Gamma(x,z) = 1
\end{equation}
where $k\in \mathbb{Z}^+$. Every comparison rule defines a total order of numbers (or tuples of numbers). For instance, the ascending order of real numbers correspond to a comparison rule defined as $\Gamma(x,y)=1$ iff $x<y$ and $\Gamma(x,y)=-1$ iff $x>y$.
We can now define $\Gamma$-strictly-increasing lists (or $\Gamma$-sorted lists) as follows.

\begin{definition}[$\Gamma$-strictly-increasing lists ($\Gamma$SIL)]
    Suppose $l$ is a list of tuples of $k$ real numbers and $\Gamma$ is a comparison rule of $k$-tuples. Then $l$ is a $\Gamma$SIL if and only if $\Gamma(l_i,l_j) = 1$ for all $i<j$.
\end{definition}

We can also define comparison operators $<_\Gamma,=_\Gamma,>_\Gamma$ using this comparison rule:
\begin{equation}
    \begin{aligned}
        x<_{\Gamma}y \quad &\mathrm{iff} \quad \Gamma(x,y)=1,\\
    x =_{\Gamma} y \quad &\mathrm{iff} \quad \Gamma(x,y)=0,\\
    x >_{\Gamma} y \quad &\mathrm{iff} \quad \Gamma(x,y) = -1.
    \end{aligned}
\end{equation}

A $\Gamma$-sorting algorithm, denoted by $\mathcal{A}_{\Gamma}$, is a classical algorithm transforming a list to a $\Gamma$-sorted list. Such algorithms can be constructed for any $\Gamma$ using any classical comparison-based algorithm by replacing the usual $<$ operator with $<_\Gamma$. Now, suppose $\Gamma_1,\Gamma_2$ are comparison rules, and a list $l$ is $\Gamma_1$-sorted. Then sorting $l$ using $\mathcal{A}_{\Gamma_2}$ leads to a $\Gamma_2$-sorted list, denoted by $\mathcal{A}_{\Gamma_2}(l)$. This list can be sorted back to $l$ using $\mathcal{A}_{\Gamma_1}$, i.e., $\mathcal{A}_{\Gamma_1}(\mathcal{A}_{\Gamma_2}(l))=l$. Since the sorting process can be inverted if both orders are known, we call this process \emph{reversible} sorting.

An example of reversible sorting is as follows. Consider two comparison operators $<_{\Gamma_1},<_{\Gamma_2}$ for integers:
\begin{equation}
    x<_{\Gamma_1} y \quad \mathrm{iff} \quad x<y
\end{equation} and 
\begin{equation}
    \begin{aligned}
        x<_{\Gamma_2} y \quad \mathrm{iff} \quad (\parity(x)=1 ~ \mathrm{AND} ~ \parity(y)=0) \\
        \mathrm{OR} \quad (\parity(x)= \parity(y) ~ \mathrm{AND} ~ x<y)
    \end{aligned}
\end{equation}
where $\parity(x)$ is the parity of $x$. Letting $l_1=(1,2,3,4,5,7,8)$ and $l_2 = (1,3,5,7,2,4,8)$, one can observe that $l_1$ is $\Gamma_1$ sorted, $l_2$ is $\Gamma_2$ sorted, and we can transform between them using $\Gamma_1$- and $\Gamma_2$-sorting algorithms, respectively:
\begin{equation}
    \begin{aligned}
        &\mathcal{A}_{\Gamma_1} (1,3,5,7,2,4,8) = (1,2,3,4,5,7,8)\\
        &\mathcal{A}_{\Gamma_2} (1,2,3,4,5,7,8)= (1,3,5,7,2,4,8).
    \end{aligned}
\end{equation}

Observe that comparators can be generalized for any arbitrary comparison rule $\Gamma$: the unitary comparator $U_\mathrm{comp}^\Gamma$ acts as
\begin{equation}
    U_\mathrm{comp}^\Gamma \ket{a,b,0} = \ket{\min(a,b),\max(a,b),\delta[a >_{\Gamma} b]}
    \label{eq:compGamma}
\end{equation}
where $\delta[a>_{\Gamma}b]$ is $1$ if $a>_{\Gamma}b$ and $0$ otherwise. Replacing $U_\mathrm{comp}$ by $U_\mathrm{comp}^\Gamma$, one can also implement $\sort^\Gamma$ and $\unsort^\Gamma$ which sort and unsort tuple lists with respect to $\Gamma$, respectively.

Reversible sorting can be implemented using sorting networks. Reversibility implies that in principle we should be able to get rid of the record register, meaning that the reversible sorting operation $\revsort^{\Gamma_1,\Gamma_2}_i$ mapping a $\Gamma_1$-sorted list to a $\Gamma_2$-sorted list should perform
\begin{equation}
    \revsort^{\Gamma_1,\Gamma_2}_i\sum_{l_1 \in \Gamma_1\SIL} \alpha_{l_1}\ket{l_1} = \sum_{l_1\in \Gamma_1\SIL} \alpha_{l_1}\ket{\mathcal{A}_{\Gamma_2}(l_1)}.
\end{equation}

This can be implemented by uncomputing the record using sorting network operations with 3 ancilla registers, as follows:
\begin{equation}
    \begin{aligned}
    &\ket{\sigma(l)}_1\ket{0}_2\ket{0}_3\ket{12\dots n}_4\\
    \stackrel{\sort^{\Gamma_2}_{1,2}}{\longrightarrow} & \ket{l}_1 \ket{\rec(\sigma)}_2 \ket{0}_3 \ket{12\dots n}_4\\
    \stackrel{\sort^{\Gamma_1}_{1,3}}{\longrightarrow} & \ket{\sigma(l)}_1 \ket{\rec(\sigma)}_2 \ket{\rec(\sigma^{-1})}_3 \ket{12\dots n}_4\\
    \stackrel{\unsort_{4,2}}{\longrightarrow} & \ket{\sigma(l)}_1 \ket{0}_2 \ket{\rec(\sigma^{-1})}_3 \ket{\sigma(12\dots n)}_4\\
    \stackrel{\shuffle_{4,3}}{\longrightarrow}& \ket{\sigma(l)}_1 \ket{0}_2 \ket{\rec(\sigma^{-1})}_3 \ket{12\dots n}_4\\
    \stackrel{\unsort^{\Gamma_1}_{1,3}}{\longrightarrow}& \ket{l}_1 \ket{0}_2 \ket{0}_3 \ket{12\dots n}_4.
    \end{aligned}
\end{equation}
Note that $l$ is a $\Gamma_2$SIL and $\sigma(l)$ is a $\Gamma_1$SIL.

\subsection{(Quantum) parallel prefix sum computation}
\label{subsec:prefix}
In this subsection, we describe a standard parallel algorithm for computing the prefix sum of an array of $n$ integers, in a reversible manner. For a list of integers $(d_1,d_2,\dots,d_n)$, its prefix sum is
\begin{equation}
    (d_1, d_1+d_2, \dots, \sum_{j=1}^i d_j,\dots, \sum_{j=1}^n d_j).
\end{equation}
This algorithm, first proposed in \cite{hillis1986data}, is based on the divide-and-conquer paradigm and uses $O(\log^2 n)$ circuit depth.

This algorithm works on $n$ integer registers. The $i$th register, denoted by $d_i$, is initialized as the $i$th value of the input array. The algorithm contains $O(\log n)$ layers of subroutines, denoted by $D(h,t)$, where $h$ and $t$ (``head" and ``tail") are indices of $d$. Each $D(h,t)$ is applied after $D(h,(h+t)/2)$ and $D((h+t)/2,t)$.

We describe $D(h,t)$ in \cref{alg:difinding}.

\begin{algorithm}[H] 
	\caption{Subroutine $D(h,t)$}
        \label{alg:difinding}
	\begin{algorithmic}
    \If{$h<t$}
    \For{$i=(h+t)/2+1$ to $t$ \textbf{in parallel}}
        \State $d_i \leftarrow d_i + d_{(h+t)/2}$
    \EndFor
    \EndIf
	\end{algorithmic}
\end{algorithm}

Without loss of generality, we only consider the case where $n$ is a power of 2 (note that we can always add up to $O(n)$ zeros at the end of the input list to make the total number a power of 2). Now, the prefix sums can be computed classically using $\log n$ layers of $D(h,t)$, as shown in \cref{alg:prefixsum}.
\begin{algorithm}[H] 
	\caption{Parallel prefix sum computation}
        \label{alg:prefixsum}
	\begin{algorithmic}
    \For{$i=1$ to $\log n$}
        \For{$j=0$ to $n / 2^i - 1$ \textbf{in parallel}}
            \State $D(j\cdot 2^i + 1,(j+1)\cdot 2^i)$
        \EndFor
    \EndFor
    \end{algorithmic}
\end{algorithm}

The subroutine $D(h,t)$ can be implemented in $O(\log n)$ depth since it involves adding $d_{(h+t)/2}$ to $(t-h)/2$ registers. Hence the total circuit depth is $\OT(\log^2 n)$.

It is straightforward to quantize the prefix sum computation by quantizing each $D(h,t)$. An example is shown in \cref{fig:quantization_d}. The quantum circuit depth is also $O(\log^2 n)$.

\begin{figure}
    \centering
    \centering
            \begin{quantikz}
                \lstick{$\ket{d_1}$} & \gate[8]{U_D} &\qw \rstick{$\ket{d_1}$} \\
                \lstick{$\ket{d_2}$} & &\qw \rstick{$\ket{d_2}$}\\
                \lstick{$\ket{d_3}$} & &\qw \rstick{$\ket{d_3}$}\\
                \lstick{$\ket{d_4}$} & &\qw \rstick{$\ket{d_4}$}\\
                \lstick{$\ket{d_5}$} & &\qw \rstick{$\ket{d_5+d_4}$}\\
                \lstick{$\ket{d_6}$} & &\qw \rstick{$\ket{d_6+d_4}$}\\
                \lstick{$\ket{d_7}$} & &\qw \rstick{$\ket{d_7+d_4}$}\\
                \lstick{$\ket{d_8}$} & &\qw \rstick{$\ket{d_8+d_4}$}
            \end{quantikz}
    \caption{Quantization of $D(1,8)$. This unitary can be implemented in $O(\log n)$ depth.}
    \label{fig:quantization_d}
\end{figure}
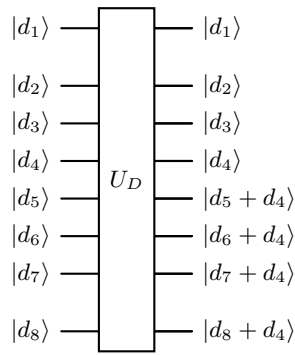

\section{Symmetrization of non-strictly increasing lists}
\label{sec:nsilsymmetrize}

In this section, we consider symmetrizing non-strictly increasing that may contain duplicated items, i.e., $\NSIL$s. In \cref{subsec:issues}, we introduce some useful subsets and subgroups of $S_n$ related to $\NSIL$s. In \cref{subsec:qsn_nsil}, we describe the behaviors of $\sort$ and $\unsort$ on NSILs, and explain why the SIL symmetrization algorithms do not apply to $\NSIL$s. In \cref{subsec:nsil_single}, we describe our new algorithm for symmetrizing a single $l$ where $l\in\NSIL$. We defer the algorithm for superpositions of $\NSIL$s (i.e., inputs of the form $\sum_l \alpha_l \ket{l}$) to \cref{subsec:nsil_superposed}.

\subsection{Degenerate permutations for NSILs}
\label{subsec:issues}
There is a kind of degeneracy of permutations for $l\in[m]^n$ with repetitions (in this paper, we mostly study $l\in\NSIL_n^m \subset [m]^n$): different permutations $\sigma_1, \sigma_2 \in S_n$ may lead to the same permuted list, i.e., $\sigma_1(l)=\sigma_2(l)$. In this subsection, we show that this degeneracy is the fundamental reason why the \textcite{berry2018improved} algorithm does not apply to NSILs. In particular, let $H_l$ denote a subset of the permutation group $S_n$ which contains only permutations preserving the input $l\in [m]^n$, i.e.,
\begin{equation}
    H_l := \{h:h(l)=l\}.
\end{equation}
Then $\sigma \circ h (l) = \sigma (l)$ for all $\sigma\in S_n, h\in H_l$. Note that all $h\in H_l$ must only permute duplicated elements in $l$. For instance, if $l=12334$, then $H_l$ contains only the identity permutation and the swap between the $3^\mathrm{rd}$ and the $4^\mathrm{th}$ elements. One can show that $H_l$ is a subgroup of $S_n$ since $h_1 \circ h_2(l) = l$ if $h_1,h_2\in H_l$. Furthermore, every permutation $\sigma\in S_n$ can be uniquely decomposed as $\sigma = \sigma' \circ h$ where $h\in H_l$ and $\sigma'$ is a permutation that preserves the relative order of duplicated items. For the same example of $l=12334$ where elements 3 and 4 are duplicated, consider a permutation $\sigma$ satisfying $\sigma(12345)=51423$. The relative position of 3 and 4 is exchanged in $\sigma$, which can be uniquely extracted as another permutation $h$ satisfying $h(12345)=12435$. Now, the fact that 3 and 4 are out of order in $\sigma$ can be ``corrected", and we can write $\sigma = \sigma' \circ h$ where $\sigma'(12345)=51324$. In fact, all elements of a left coset of $H_l$ in $S_n$ permute $l$ in the same way:
\begin{equation}
    \sigma H_l (l) = \{\sigma(l)\}.
\end{equation}
Since every element of $S_n$ must belong to exactly one left coset of $H$ in $S_n$, we conclude that for any $l\in[m]^n$, 
\begin{equation}
    S_n = R_lH_l := \{\sigma' \circ h: \sigma'\in R_l, h\in H_l\}, 
\end{equation}
where $R_l$ denotes the set of all permutations preserving the relative orders of duplicated elements in the list $l$. Note that $R_l$ is a set of canonical elements of all left cosets of $H$ in $S_n$, which can be chosen arbitrarily.

This leads to another observation: a superposition of all permutations in $S_n$ acting on $l$ and $123\dots n$, respectively, can be represented as
\begin{equation}
\begin{aligned}
    &\sum_{\sigma\in S_n} \ket{\sigma(l)} \ket{\sigma(12\dots n)}  \\
    &\quad = \sum_{\sigma' \in R_l} \ket{\sigma'(l)} \sum_{h\in H_l} \ket{\sigma'\circ h (12\dots n)}.
\end{aligned}
\end{equation}

\subsection{Quantum sorting networks for NSILs}
\label{subsec:qsn_nsil}
Recall that when introducing sorting-network-related operations in \cref{sec:prelim}, $\sort$ is designed only to sort $\sigma(l)$ where $l$ is strictly increasing, and $\unsort$ is also only designed to act on SILs. Here we discuss what happens if both operations are applied to NSILs. When $l$ contains duplicates, the $\sort$ operation still transforms $\sigma(l)$ to $l$, but what is stored in the record depends on the \emph{stability} of the sorting algorithm. Recall from \cref{sec:prelim} that we are free to choose the sorting algorithm. A sorting algorithm is called \emph{stable} if it preserves the relative order of duplicated elements. Therefore, sorting any permutation in a given coset using a stable algorithm produces the record corresponding to the canonical element we picked ($R_l$), i.e., the record of sorting $\sigma(l)$ will be $\rec(\sigma')$ where $\sigma'(l) = \sigma(l)$ and $\sigma'\in R_l$. 
(Recall that $\rec(\sigma)$ is the record obtained from sorting $\sigma(l_\mathrm{S})$ where $l_\mathrm{S}$ can be any $\SIL$.)
On the other hand, unstable sorting networks shuffle the relative order of duplicated elements, so the record of sorting $\sigma(l)$ will be $\rec(\sigma' \circ h')$ where $h'$ is an unknown permutation in $H_l$ depending on $\sigma'$. In this paper, we consider general unstable sorting networks, since many efficient sorting networks, including bitonic sort \cite{ajtai19830}, are unstable.

The effect of $\unsort$ is less clear. We consider two cases:
\begin{enumerate}
    \item If the record is generated by performing $\sort$ on $\ket{\sigma'(l)}\ket{00\dots 0}$, then the record is $\ket{\rec(\sigma'\circ h')}$ where $\sigma'\in R_l, h'\in H_l$. The operation of $\unsort$ is
    \begin{equation}
    \unsort_{1,2} \ket{l}_1 \ket{\rec(\sigma'\circ h')}_2 = \ket{\sigma'(l)}_1 \ket{00\dots 0}_2
    \end{equation}
    since $\unsort$ is the inverse of $\sort$.
    \item If the record is $\rec(\sigma'\circ h_\mathrm{S})$ with $\sigma'\in R_l, h_\mathrm{S}\in H_l$ and $h_\mathrm{S} \neq h'$ (as can be generated by sorting $\sigma' \circ h_\mathrm{S} (l_\mathrm{S})$ where $l_\mathrm{S} \in \SIL$), then $\sigma'(l)$ can still be generated, but the record register is \emph{not} reset to $\ket{00\dots}$. Instead, $\unsort$ acts as
    \begin{equation}
        \unsort_{1,2} \ket{l}_1 \ket{\rec(\sigma'\circ h_\mathrm{S})}_2 =\ket{\sigma'(l)}_1 \ket{\tra(\sigma',h_\mathrm{S},l)}_2
    \end{equation}
    where $\ket{\tra(\sigma',h_\mathrm{S},l)}$ is a \emph{trash state} depending on the choice of $\sigma', h_\mathrm{S}$, and $l$,  and must be orthogonal to $\ket{00\dots 0}$.
    The is because $\unsort$ is a unitary and $\langle \rec(\sigma'\circ h') | \rec(\sigma'\circ h_\mathrm{S})\rangle = 0 $.
    The trash state contains unremovable $1$s in the record that are generated by swaps between inversions of $i$ and $j$ in $\sigma'\circ h_\mathrm{s} (12\dots n)$ with $l_i = l_j$.
\end{enumerate}

To illustrate the problem, we give a very simple example for the bubble sort network shown in \cref{fig:bubble}. If the record is generated by sorting $\sigma(l_\mathrm{S})=312$, then $\rec(\sigma) = 110$. (In the first comparator, $3>1$, hence its record is $1$; in the second comparator, $3>2$, hence its record is $1$; in the last comparator, $1<2$, hence its record is $0$.) Now, if $l=122$, unsorting $122$ using the record leads to the following transcript: in the last comparator, no swap is performed and $1<2$, hence its record is unchanged; in the second comparator, a swap is performed due to the record, but since $2=2$, the record is \emph{unchanged} and still $1$; in the first comparator, a swap is performed and $2>1$, so the record is flipped to $0$. The state of the record after unsorting is $010$, instead of the desired $000$. In \cref{tab:trashstates}, we give a complete list of bubble-sort-based $\unsort$ operations on $122$ for all possible records generated by sorting $\sigma(123)$.

Recall that the common final step of existing SIL symmetrization algorithms is unsorting the input $l\in \SIL$ using the superposition of all records of $\sigma\in S_n$. The observations above indicate that, if instead $l\in \NSIL$, these algorithms would in general not work, because the scratch qubits in register 2 are entangled with register 1, such that one cannot obtain the desired superposition by discarding them.

Let us consider what kind of sorting algorithms can make $\unsort$ leave registers 1 and 2 independent, so the state in register 1 is the desired superposition. This can only be achieved if the set of trash states obtained by sorting the members of a left coset of $H_l$ in $S_n$ (such as $\sigma' H_l$ with $\sigma'\in R_l$) is the same for all those left cosets, that is to say,
\begin{equation}
    \{\tra(\sigma',h,l) \mid \forall h\in H_l \}
\end{equation}
is the same set for all $\sigma'\in R_l$. In this case, register 1 is in the superposition of permuted lists over all left cosets $\sigma'H_l$, and register 2 is in the superposition of all possible trash states that are independent of the choice of coset in register 1. Although we cannot rule out the possibility that such sorting algorithms exist, it does not seem straightforward to find them. We thus leave this as an open problem.

Moreover, even if such a sorting algorithm exists, the corresponding SIL symmetrization algorithm will still fail when the input list is a superposition of different NSILs with different patterns of repetition, as the trash states also depend on the choice of $l$.

\begin{table}[]
    \centering
    \begin{tabular}{|c|c|c|c||c|}
        \hline
         $\sigma(123)$ & $\rec(\sigma)$ & $\sigma(122)$ & $\ket{\tra}$ & Action of $\unsort$ \\
         \hline
         \hline
         123 & 000 & 122 & 000 & $\ket{122}\ket{000} \mapsto \ket{122}\ket{000}$ \\
         \hline
         132 & 010 & 122 & 010 & $\ket{122}\ket{010} \mapsto \ket{122}\ket{010}$ \\
         \hline
         213 & 100 & 212 & 000 & $\ket{122}\ket{100} \mapsto \ket{212}\ket{000}$\\
         \hline
         231 & 011 & 221 & 000 & $\ket{122}\ket{011} \mapsto \ket{221}\ket{000}$\\
         \hline
         312 & 110 & 212 & 010 & $\ket{122}\ket{110} \mapsto \ket{212}\ket{010}$\\
         \hline
         321 & 111 & 221 & 100 & $\ket{122}\ket{111} \mapsto \ket{221}\ket{100}$\\
         \hline
    \end{tabular}
    \caption{The full table of the action of $\unsort$ (based on bubble sort) on $122$ when the records of all 6 possible permutations of $123$ are given.}
    \label{tab:trashstates}
\end{table}

\subsection{NSIL symmetrization for a single input}
\label{subsec:nsil_single}
In this subsection, we describe our algorithm for symmetrizing NSILs for input state $\ket{l}$ using our observations mentioned in \cref{subsec:issues,subsec:qsn_nsil}. The algorithm extensively uses all four sorting network operations introduced in \cref{sec:prelim}. We use three registers (the latter two are ancilla registers): register 1 stores the input NSIL $l$, register 2 stores all records for the sorting algorithm and is initialized as all zero, and register 3 stores the superposition of all $\sigma(12\dots n)$ initially, which can be generated using either the \textcite{berry2018improved} SIL symmetrization algorithm (with $\OT(\log n)$ depth) or our new algorithm described in \cref{appendix:newsil} (with $\OT(\log^3(n))$ depth). Register 3 is used as a platform to mediate multiplications of permutations, which eventually allows us to extract the subgroup structure corresponding to NSILs.

As the first step, we sort each $\sigma(12\dots n)$ reversibly using sorting networks and store the records of swaps:
\begin{equation}
\begin{aligned}
    \ket{l}_1 \ket{00\dots 0}_2 \sum_{\sigma \in S_n} \ket{\sigma(12\dots n)}_3\\
  \stackrel{\sort_{3,2}}{\longrightarrow} \ket{l}_1 \sum_{\sigma \in S_n} \ket{\rec(\sigma)}_2 \ket{12\dots n}_3.
\end{aligned}
\end{equation}

Next, we apply the swaps stored in the record to shuffle register 1, and then unsort register 3:
\begin{equation}
\begin{aligned}
    \stackrel{\shuffle_{1,2}}{\longrightarrow} 
\sum_{\sigma \in S_n} \ket{\sigma(l)}_1 \ket{\rec(\sigma)}_2 \ket{12\dots n}_3\\
\stackrel{\unsort_{3,2}}{\longrightarrow} \sum_{\sigma \in S_n} \ket{\sigma(l)}_1 \ket{00\dots 0}_2 \ket{\sigma(12 \dots n)}_3.
\end{aligned}
\end{equation}

The decomposition of a permutation for non-strictly increasing lists can now be applied to rewrite the above state:
\begin{equation}
\label{eq:rewritten}
\begin{aligned}
    &\sum_{\sigma \in S_n} \ket{\sigma(l)}_1 \ket{00\dots 0}_2 \ket{\sigma(12\dots n)}_3 \\
    &=  \sum_{\sigma' \in R_l} \ket{\sigma'(l)}_1 \ket{00\dots 0}_2 \sum_{h\in H_l}  \ket{\sigma'\circ h(12\dots n)}_3.
\end{aligned}
\end{equation}

Now, we perform sorting network operations on the state above to sort the $\sigma'(l)$ strings:

\begin{equation}
\begin{aligned}
&\sum_{\sigma' \in R_l} \ket{\sigma'(l)}_1 \ket{00\dots 0}_2 \sum_{h\in H_l}  \ket{\sigma'\circ h(12\dots n)}_3\\
    \stackrel{\sort_{1,2}}{\longrightarrow}& \sum_{\sigma' \in R_l} \ket{l}_1 \ket{\rec(\sigma'\circ h')}_2 \sum_{h\in H_l} \ket{\sigma'\circ h(12\dots n)}_3.
\end{aligned}
\end{equation}
Next, we apply the swaps recorded in register 2 to unshuffle register 3:
\begin{equation}
\begin{aligned}
    \stackrel{\unshuffle_{3,2}}{\longrightarrow} & \sum_{\sigma' \in R_l} \ket{l}_1 \ket{\rec(\sigma'\circ h')}_2 \cdot \\
    &\quad\cdot \sum_{h\in H_l} \ket{h'^{-1}\circ \sigma'^{-1}\circ \sigma'\circ h(12\dots n)}_3\\
    &=\sum_{\sigma' \in R_l} \ket{l}_1 \ket{\rec(\sigma'\circ h')}_2 \cdot \\
    &\quad\cdot \sum_{h\in H_l} \ket{h'^{-1}\circ h(12\dots n)}_3\\
    &=\sum_{\sigma' \in R_l} \ket{l}_1 \ket{\rec(\sigma'\circ h')}_2 \sum_{h\in H_l} \ket{h(12\dots n)}_3
\end{aligned}
\end{equation}
where in the last line we used the fact that $h'^{-1}\in H_l$ and $h'^{-1}H_l = H_l$ when $h'\in H_l$.

Now, to obtain a superposition of $\sigma(l)$ strings, it suffices to unsort $l$ using the records:
\begin{equation}
\label{eq:lastunsort}
    \stackrel{\unsort_{1,2}}{\longrightarrow} \sum_{\sigma' \in R_l} \ket{\sigma'(l)}_1 \ket{00\dots 0}_2 \sum_{h\in H_l} \ket{h(12\dots n)}_3.
\end{equation}
We notice that the strings in register 3 are now only superpositions of $h(12\dots n)$ over $h\in H_l$, and are independent of which permutation is applied in register 1. Therefore, if the input is a single NSIL basis state $\ket{l}$, one can simply discard registers 2 and 3, and the state in register 1 is $\sum_{\sigma\in S_n} \ket{\sigma(l)}$. The output fidelity of single-input symmetrization is $1$ if the resource state $\sum_\sigma \ket{\sigma(12\dots n)}$ is prepared exactly. The \textcite{berry2018improved} algorithm can produce this state with high probability, as mentioned in \cref{subsec:silsymmetrization}.

\section{NSIL symmetrization for a superposed input}
\label{subsec:nsil_superposed}

If the input to the above procedure is some arbitrary superposition of different NSILs with different patterns of repetitions, registers 1 and 3 will be entangled. As a simple example, if $\ket{\psi_\mathrm{in}} = \frac{1}{\sqrt{3}} \ket{1223} + \sqrt{\frac{2}{3}}\ket{1333}$, then the state of the algorithm at the step of Equation~\eqref{eq:lastunsort} is
\begin{equation}
\begin{aligned}
    &\frac{1}{\sqrt{3}} \sum_{\sigma' \in S_{1223}} \ket{\sigma'(1223)}_1 \ket{00\dots 0}_2 \sum_{h\in H_{1223}} \ket{h(1234)}_3\\
    + &\sqrt{\frac{2}{3}} \sum_{\sigma' \in S_{1333}} \ket{\sigma'(1333)}_1 \ket{00\dots 0}_2 \sum_{h\in H_{1333}} \ket{h(1234)}_3,
\end{aligned}
\end{equation}
from which we cannot obtain $\ket{\psi_\mathrm{out}}$ by simply discarding register $3$. Hence, we must erase the scratch qubits in register 3 for all possible input strings $l$ coherently. This can be done by inverting the circuit preparing the superposition of the $H_l$ subgroup ($\sum_{h\in H_l} \ket{h(12\dots n}$) from $\ket{l}$. Therefore, in this section, we design a unitary to implement the following subgroup superposition preparation procedure:
\begin{equation}
    \sum_l \alpha_l \ket{l} \ket{12\dots n} \mapsto \sum_l \alpha_l \ket{l} \frac{1}{\sqrt{|H_l|}} \sum_{h\in H_l} \ket{h(12\dots n)}.
\end{equation}
In this section, we first design a $\poly(\log n)$-depth quantum algorithm for SIL symmetrization with perfect success probability in \cref{appendix:newsil}, which is used as a subroutine to solve the subgroup superposition preparation problem in \cref{appendix:exactnsil}.

\subsection{An exact SIL symmetrization algorithm}
\label{appendix:newsil}

Our exact SIL symmetrization algorithm relies on a classical parallel algorithm for generating a random permutation proposed by \cite{alonso1996parallel}. However, that algorithm was introduced in a non-reversible way. Therefore, in this subsection, we present that algorithm in a pedagogical manner and indicate along the way how to make it reversible.

\subsubsection{Lower exceeding sequences}
 The central tool used in the \textcite{alonso1996parallel} algorithm is the concept of lower exceeding sequences (LESs).
\begin{definition}[Lower exceeding sequences]
    A sequence $s\in \{1,2,\dots,n\}^n$ is a lower exceeding sequence if and only if $1\leq s_i \leq i$ for all $i\in [n]$.
\end{definition}
The number of LESs of length $n$ is $n!$, while the number of permutations of $n$ items is also $|S_n|=n!$. Reference \cite{alonso1996parallel} efficiently constructs a bijective mapping between these sets and uses it to generate a random permutation by generating an LES and then applying the mapping. Our quantum algorithm is based on a similar approach.

The mapping of \cite{alonso1996parallel} is defined as follows.
\begin{definition}[$s^\sigma$ as in \cite{alonso1996parallel}]
The LES $s^\sigma \in [n]^n$ corresponding to a permutation $\sigma\in S_n$ has $i$th element
\begin{equation*}
    s^\sigma_i = \left| \{ j : \sigma^{-1}(j) \leq \sigma^{-1}(i); j\leq i \} \right|
\end{equation*}
for all $i\in[n]$.
\end{definition}
It is easy to verify that $s^\sigma$ is an LES for any $\sigma\in S_n$, but it is less intuitive to explain the meaning of the values in $s^\sigma$. Fortunately, \textcite{alonso1996parallel} give an intuitive way to understand the meaning of $s^\sigma$ diagrammatically, as follows.

First, from each permutation $\sigma\in S_n$, one can compute $s^\sigma$. Consider a permutation $\sigma$ acting on the list of first $n$ positive integers, giving $\sigma(123\dots n)$. Let $\sigma(123\dots n)_i$ denote the 
$i$th element of $\sigma(123\dots n)$. One can draw the permutation diagrammatically on an $n\times n$ grid: if $\sigma(12\dots n)_i = j$, then block $(i,j)$ of the grid is filled, where $i$ is the row index and $j$ is the column index.
We call this a \emph{permutation-LES diagram}. The $j$th element of the generated LES is simply the number of filled blocks in the rectangle between $(1,1)$ and $(i,j)$. Since each column and row has and only has 1 filled block, the number of filled blocks between $(1,1)$ and $(i,j)$ is guaranteed to be at most $j$. \Cref{fig:les2permu} depicts an example with $n=6$.

\begin{figure}[t]
    \centering
\includegraphics[width=0.4\textwidth]{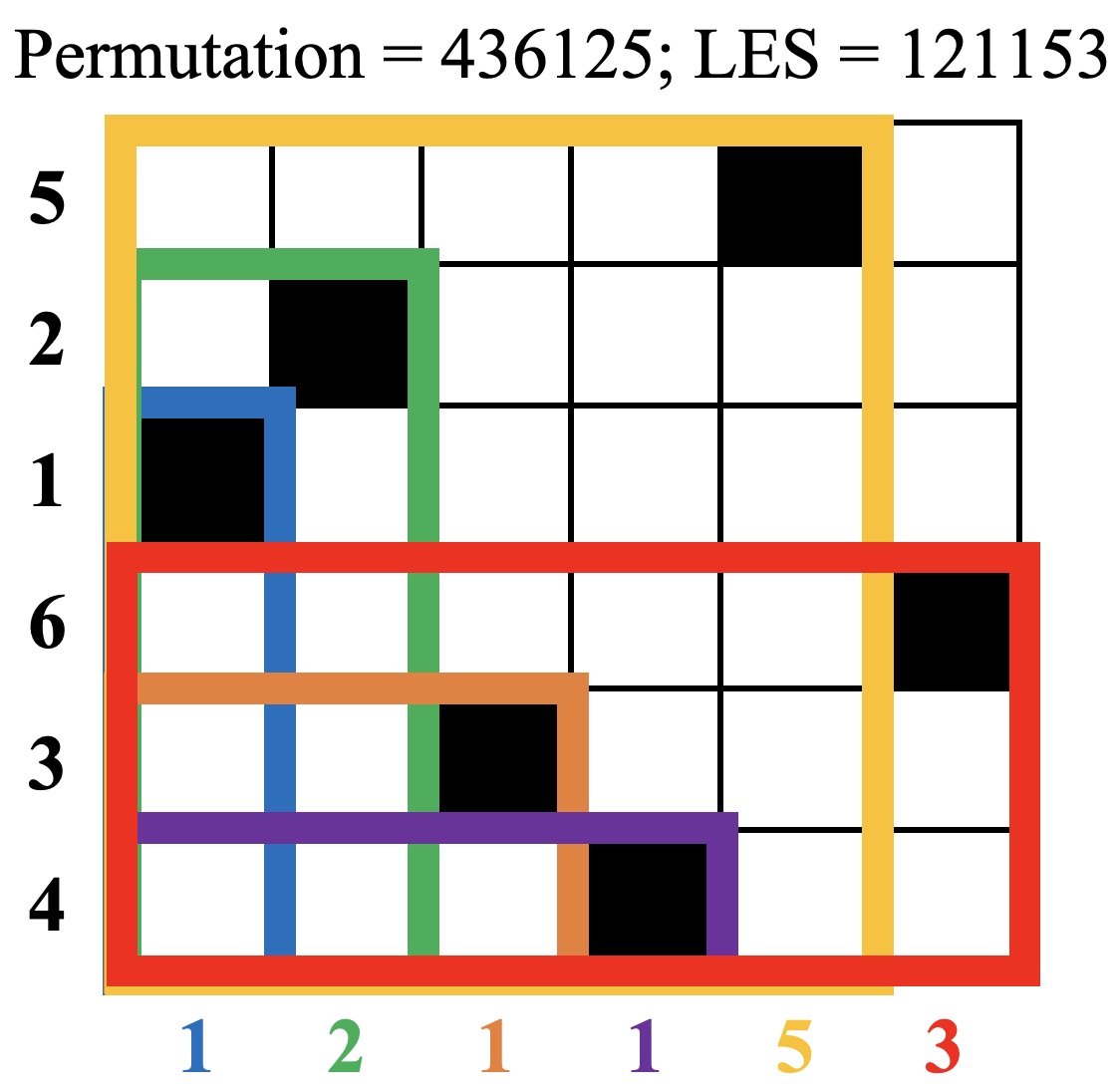}
\caption{The permutation-LES diagram for $\sigma(123456)=436125$, which appears on the vertical axis. The corresponding LES is $121153$, as shown on the horizontal axis. To determine the element of the LES in a given column from the permutation, we draw a rectangle whose bottom left corner is at the origin and whose top right corner is at the filled square in the corresponding column, and count the number of filled blocks in that rectangle. The rectangles are colored to indicate the corresponding element of the LES (i.e., the column).}
    \label{fig:les2permu}
\end{figure}

Conversely, from each LES, one can generate its corresponding permutation by again drawing the permutation-LES diagram. Suppose the LES is $(s_1 s_2 \dots s_n)$ and our goal is to compute $y_j$, the $y$-coordinate of the filled block in the $j$th column, for all $j \in \{1,\ldots,n\}$. We run through the LES from the last element $s_n$ to the first element $s_1$. To have exactly $s_n$ filled blocks between $(1,1)$ and $(y_n,n)$, we must fill the $s_n$th block of the last column, which means $y_n=s_n$. Since every row and column in the diagram should only have 1 filled block, we call the $y_n$th row \emph{used}.
Next, we observe that the $s_{n-1}$th unused row must be filled in the second-to-last column such that there are exactly $s_{n-1}$ filled blocks between $(1,1)$ and $(y_{n-1}, n-1)$, because all of the first $s_{n-1}-1$ unused rows will eventually be used and counted when determining the $(n-1)$st element of the LES. 
Similarly, we can find all other values of $y_j$ from $j=n-2$ to $j=1$ by filling the $s_{j}$th unused row of the $j$th column. Finally, the permutation can be read out directly from the diagram.

A naive implementation of the above procedure for converting an LES to its corresponding permutation takes $O(n\log n)$ depth. Fortunately, \cite{alonso1996parallel} provides a $\poly(\log n)$-depth algorithm to do the conversion, which is outlined next.

\subsubsection{From LES to permutation with reversible merging}

\textcite{alonso1996parallel} show that the blocks can be filled in a divide-and-conquer manner, as follows. Divide the LES into two parts at the middle, make the sub-diagrams for both parts individually, and merge them together. The first two steps are straightforward, but the ``merging" step is nontrivial. A simple merging strategy is not hard to see: one can insert every row of the left diagram into the unfilled rows of the right diagram. More specifically, we first concatenate two diagrams together and keep the right diagram unchanged. Then, if the $j$th column of the left diagram has the $y_j$th block filled, we move this block to the $y_j$th \emph{unfilled} row of the right diagram.

This merging strategy is illustrated in \cref{fig:merge} where the same permutation as in \cref{fig:les2permu} ($\sigma(123456)=436125$) is recovered. 

Performing this merging naively takes $O(n)$ steps for each pair of sub-diagrams. To reduce it to $O(\log n)$ steps, \textcite{alonso1996parallel} introduce another way of representing the diagrams: a sorted list of $(r_i,c_i)$ tuples, where each $(r_i,c_i)$ represents the row and column coordinates of a filled block in the diagram. We describe the general algorithm together with a running example that shows how it works in the case depicted in \cref{fig:les2permu}.

By default, the filled blocks are sorted by their row coordinates, i.e., the comparison rule $\Gamma_f$ is
\begin{equation}
(r_i,c_i) <_{\Gamma_f} (r_j,c_j) \quad \mathrm{iff} \quad r_i < r_j.    
\end{equation}
We can reduce the problem of merging diagrams to calculating the final $\Gamma_f$-sorted list, denoted by $T_f$, which contains all row indices from $1$ to $n$, i.e.,
\begin{equation}
    T_f = (1,z_1) (2,z_2) \dots (n,z_n).
\end{equation}
Once such a list is obtained, the corresponding permutation is simply $\sigma(12\dots n) = z_1 z_2 \dots z_n$. For instance, the $T_f$ corresponding to $\sigma(123456)=436125$ in \cref{fig:merge} is
\begin{equation}
    T_f = (1,4) (2,3) (3,6) (4,1) (5,2) (6,5).
\end{equation}

\begin{figure*}[t]
    \centering
    \includegraphics[width=0.88\textwidth]{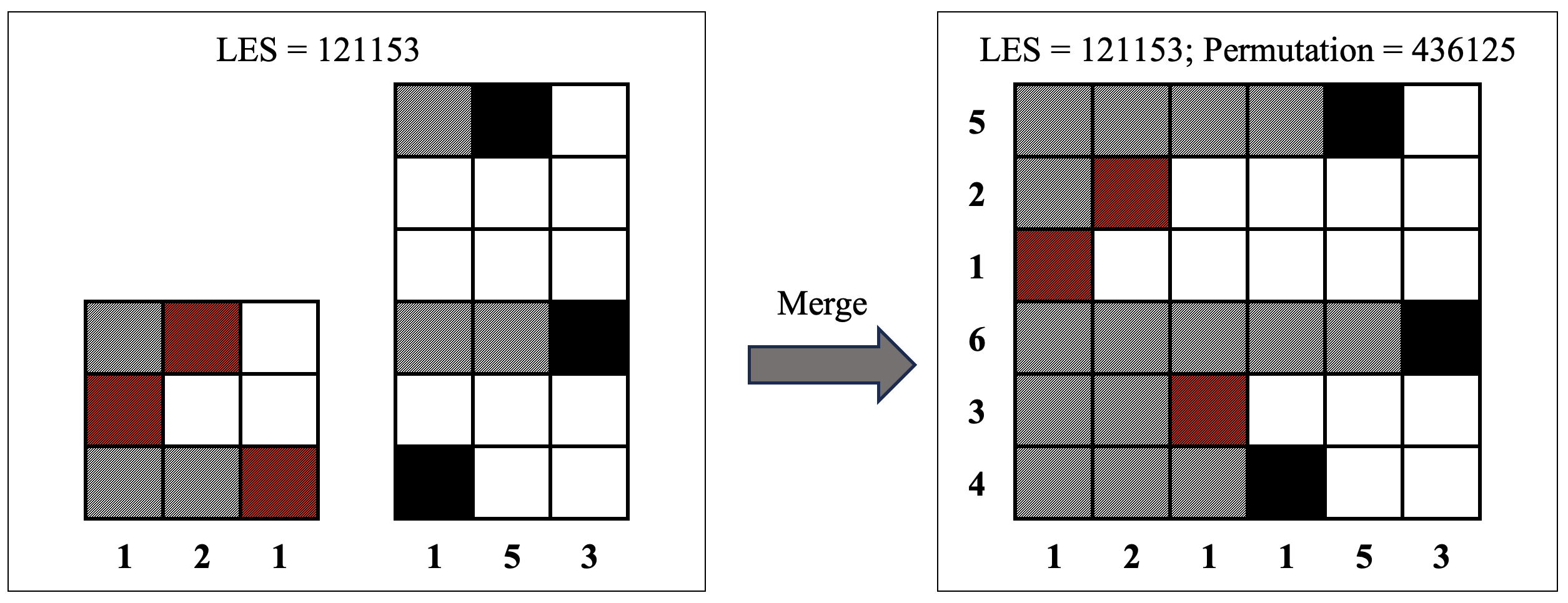}

    \caption{Action of the merging algorithm on the permutation $\sigma(123456)=436125$ (whose permutation-LES diagram is illustrated in \cref{fig:les2permu}). The gray squares are ``unavailable" since each row can only have one filled square. Dark red squares indicate filled blocks of the left diagram. When merging the two subdiagrams, the $i$th row in the left diagram is inserted into the $i$th unfilled row of the right diagram.}
    \label{fig:merge}
\end{figure*}

The key strategy of \textcite{alonso1996parallel} is to first calculate the $\Gamma_f$-sorted tuple lists for the left and right sub-diagrams, and merge them together to build the whole diagram. Note that each sub-diagram's list can also be calculated recursively using the same divide-and-conquer method. Without loss of generality, we only describe the last step of the algorithm, i.e., merging the last two sub-diagrams to obtain $T_f$, since all previous steps use the same procedure. We also emphasize that we describe the whole algorithm in a \emph{reversible} manner---unlike in the original paper \cite{alonso1996parallel}---so that it can be directly quantized.

We use $T_0$ to denote the direct concatenation of two $\Gamma_f$-sorted lists of both sub-diagrams. If both lists have $k$ elements, then
\begin{equation}
    T_0 = (r_1,c_1)(r_2,c_2)\dots (r_k,c_k) (r_{k+1},c_{k+1}) \dots (r_{2k},c_{2k}),
\end{equation}
where the first half and the second half are both $\Gamma_f$-sorted.
$T_0$ is sorted under the following comparison rule, denoted by $\Gamma_0$: $(r_i,c_i)<_{\Gamma_0} (r_j,c_j)$, if either of the following two conditions is satisfied: 
\begin{itemize}
	\item $c_i\leq k$ AND $c_j > k$;
	\item $r_i<r_j$ AND ($c_i,c_j\leq k$ OR $c_i,c_j > k$).
\end{itemize}

For the example in \cref{fig:merge}, $k=3$, and the $\Gamma_f$-sorted tuple list of the first subfigure is $(1,3)(2,1)(3,2)$ for the left-hand side and $(1,4)(3,6)(6,5)$ for the right-hand side. Therefore
\begin{equation}
	T_0 = (1,3)(2,1)(3,2)(1,4)(3,6)(6,5).
\end{equation}

Our objective is to convert $T_0$ to $T_f$ in a reversible manner. This can be done by first sorting $T_0$ reversibly such that every $c_i$ is at its correct position, i.e., we need to permute $T_0$ using some permutation $\omega$ such that $c_{\omega(i)} = z_i$. This sorted list, denoted by $T_1$, is 
\begin{equation}
\begin{aligned}
    T_1 =& (r_{\omega(1)},c_{\omega(1)}) (r_{\omega(2)},c_{\omega(2)}) \dots (r_{\omega(2k)},c_{\omega(2k)}) \\
    =& (r_{\omega(1)},z_1) (r_{\omega(2)},z_2) \dots (r_{\omega(2k)},z_{2k}).
\end{aligned}
\end{equation}
For the example in \cref{fig:merge},
\begin{equation}
T_1 = (1,4)(1,3)(3,6)(2,1)(3,2)(6,5).
\end{equation}
Note that the sorting procedure plays the same role as inserting elements from the left-hand side in the diagrams in \cref{fig:merge} to the right-hand side. In the second step, we change the $r_{\omega(i)}$ values in $T_1$ to $i$ reversibly so that $T_f$ is obtained.
We explain how to implement both steps as follows.

\textbf{Step 1: $T_0\mapsto T_1$.}
We reversibly sort $T_0$ to obtain $T_1$ using comparison-based sorting algorithms, based on the rules given by \textcite{alonso1996parallel}. 
First, we define $\tilde{T}_0$ and $\tilde{T}_1$ by concatenating each tuple in the lists with its index:
\begin{equation}
\begin{aligned}
    \tilde{T}_0 &= (1,r_1,c_1) (2,r_2,c_2) \dots (2k, r_{2k}, c_{2k}) \\
    \tilde{T}_1 &= (\omega(1),r_{\omega(1)},c_{\omega(1)}) (\omega(2),r_{\omega(2)},c_{\omega(2)}) 
    \dots\\
    &\quad\dots(\omega(2k),r_{\omega(2k)},c_{\omega(2k)}) \\
    &=  (\omega(1),r_{\omega(1)},z_1) (\omega(2),r_{\omega(2)},z_2)\dots\\
    &\quad \dots (\omega(2k),r_{\omega(2k)},z_{2k}).
\end{aligned}
\end{equation}
It is trivial to reversibly construct $\tilde{T}_0$ from $T_0$, and $\tilde{T}_0$ is also $\Gamma_0$-sorted.
For the example in \cref{fig:merge},
\begin{equation}
\begin{aligned}
    \tilde{T}_0 &= (1,1,3) (2,2,1) (3,3,2) (4,1,4) (5,3,6) (6,6,5)\\
    \tilde{T}_1 &= (4,1,4) (1,1,3) (5,3,6) (2,2,1) (3,3,2)(6,6,5).
\end{aligned}
\end{equation}

According to Ref.~\cite{alonso1996parallel}, $\tilde{T}_1$ is a $\Gamma_1$-sorted tuple list, where $\Gamma_1$ is defined as follows.
(Here we simply describe $\Gamma_1$ without proving why $\tilde{T}_1$ is $\Gamma_1$-sorted; see Ref.~\cite{alonso1996parallel} for the proof.) For $\omega(i) \neq \omega(j)$, $(\omega(i),r_{\omega(i)},c_{\omega(i)})<_{\Gamma_1} (\omega(j),r_{\omega(j)},c_{\omega(j)})$ if and only if any of the following conditions is satisfied:
\begin{enumerate}
    \item $\omega(i),\omega(j)\leq k$ AND $\omega(i)<\omega(j)$;
    \item $\omega(i),\omega(j) > k$ AND $\omega(i)<\omega(j)$;
    \item $\omega(i)\leq k, \omega(j)>k$ AND $r_{\omega(i)} + (\omega(j)-k) \leq r_{\omega(j)}$;
    \item $\omega(i)>k, \omega(j)\leq k$ AND $r_{\omega(j)} + (\omega(i)-k) \leq r_{\omega(i)}$.
\end{enumerate}

One can now get $\tilde{T}_1$ by sorting $\tilde{T}_0$ using any reversible sorting network without knowing $\omega$ explicitly, since $\tilde{T}_1 = \mathcal{A}_{\Gamma_1}(\tilde{T}_0)$ and $\tilde{T}_0 = \mathcal{A}_{\Gamma_0}(\tilde{T}_1)$. This step takes $\OT(\log n)$ depth.

Next, we convert $\tilde{T}_1$ to $T_1$ by removing the first element of each tuple, $\omega(i)$, reversibly. This requires us to compute the position of the tuple in the $\tilde{T}_0$ list using $i$, $r_{\omega(i)}$, and $c_{\omega(i)}$. A key observation is that, if $(i,r_i,c_i)<_{\Gamma_0} (j,r_j,c_j)$ and $c_i,c_j \leq k$, then $(\omega(i),r_{\omega(i)},c_{\omega(i)}) <_{\Gamma_1} (\omega(j),r_{\omega(j)},c_{\omega(j)})$, which also holds when $c_i,c_j > k$. This implies that the relative orders of each sub-diagram are preserved by sorting under $\Gamma_1$. Therefore, we can check whether $(\omega(i),r_{\omega(i)},c_{\omega(i)})$ is from the left diagram or the right diagram (which is easy by comparing $c_{\omega(i)}$ and $k$) and compute its rank in its sub-diagram (denoted by $d_{L,\omega(i)}$ if it is in the left sub-diagram or $d_{R,\omega(i)}$ if it is in the right sub-diagram) and compute
\begin{equation}
\omega(i) = 
\begin{cases}
    d_{L,\omega(i)}, & c_{\omega(i)}\leq k\\
    d_{R,\omega(i)} + k,& c_{\omega(i)} > k.
\end{cases}
\end{equation}
Now, it suffices to compute all $d_{L,\omega(i)}$ and $d_{R,\omega(i)}$ values. To do so, we first initialize both arrays by
\begin{equation}
\begin{array}{ll}
        d_{L,\omega(i)} \leftarrow 1,\, d_{R,\omega(i)} \leftarrow 0 &\mathrm{if}~ c_{\omega(i)} \leq k \\
        d_{L,\omega(i)} \leftarrow 0,\, d_{R,\omega(i)} \leftarrow 1 &\mathrm{if}~ c_{\omega(i)} > k.
\end{array}
\end{equation}
Then, using the prefix sum algorithm described in \cref{subsec:prefix}, both arrays can be computed in $O(\log^2 n)$ depth.

\textbf{Step 2: $T_1\mapsto T_f$.}
Converting $T_1$ to $T_f$ is essentially performing $r_{\omega(i)} \mapsto i$. Observe that if $c_{\omega(i)}>k$, then $r_{\omega(i)} = i$ and nothing needs to be done, since filled blocks in the right sub-diagram stay in the same rows during the merge. For $c_{\omega(i)} \leq k$, this update can be easily done by $r_{\omega(i)} \leftarrow r_{\omega(i)}+d_{R,\omega(i)}$ since a filled block in the left sub-diagram is moved to another row in the final diagram, and the new row is its row index in the left sub-diagram ($r_{\omega(i)}$) plus the number of rows below ($d_{R,\omega(i)}$) that are filled by blocks in the right sub-diagram. This process takes $O(\log^2 n)$ depth since it involves computing $d_{R,\omega(i)}$.

\subsubsection{The SIL symmetrization algorithm}
Since we can implement the merging process ($T_0 \mapsto T_1 \mapsto T_f$) in depth $\OT(\log^2 n)$ ($\OT(\log n)$ for $\tilde{T}_0 \mapsto \tilde{T}_1$, $\OT(\log^2 n)$ for $\tilde{T}_1 \mapsto T_1$, and $\OT(\log^2 n)$ for $T_1\mapsto T_f$), and this divide-and-conquer algorithm uses $O(\log n)$ layers of merging, the total depth of the LES-to-permutation conversion is $\OT(\log^3 n)$. To generate the superposition of all permutations of the nontrivial input $l\in\SIL$ with $|l|=n$ and $l\neq (123\dots n)$ in low depth, we ``move'' the permutations stored in $\sigma(12\dots n)$ to $l$ using the following quantum algorithm. Note that all ancilla qubits involved are omitted.
\begin{enumerate}
    \item Generate the equal superposition of all LESs of length $n$: $\ket{0}^{\otimes n} \mapsto \frac{1}{\sqrt{n!}} \bigotimes_{i=1}^n \sum_{s_i=1}^i \ket{s_i}$.
    \item Generate permutations from LESs using the quantized \textcite{alonso1996parallel} algorithm:\\
    $\frac{1}{\sqrt{n!}} \bigotimes_{i=1}^n \sum_{s_i=1}^i \ket{s_i} \mapsto \frac{1}{\sqrt{n!}} \sum_{\sigma\in S_n} \ket{\sigma (12\dots n)} $.
    \item Sort each permuted list using any (irreversible) sorting network:\\
    \begin{equation}
        \begin{aligned}
            &\frac{1}{\sqrt{n!}} \ket{00\dots0}_1 \sum_{\sigma\in S_n} \ket{\sigma (12\dots n)}_2\\
            \stackrel{\sort_{2,1}}{\longrightarrow}  &\frac{1}{\sqrt{n!}} \sum_{\sigma\in S_n} \ket{\rec(\sigma)}_1\ket{12\dots n}_2
        \end{aligned}
    \end{equation}
    where register 2 can be dropped since it is independent of $\sigma$.
    \item Unsort the input list $l$ using the records:
    \begin{equation}
    \begin{aligned}
        &\frac{1}{\sqrt{n!}} \sum_{\sigma\in S_n} \ket{\rec(\sigma)}_1 \ket{l}_2\\
        \stackrel{\unsort_{2,1}}{\longrightarrow} & \frac{1}{\sqrt{n!}}\sum_{\sigma\in S_n} \ket{00\dots 0}_1 \ket{\sigma(l)}_2
    \end{aligned}
    \end{equation}
    where register 1 can be dropped since it is independent of $\sigma(l)$.
\end{enumerate}

The whole process can be implemented as a single unitary, so it is guaranteed to produce the ideal symmetrized state.

\subsection{NSIL symmetrization for superposed inputs}
\label{appendix:exactnsil}
Recall that our algorithm for NSIL symmetrization involves preparing the equal superposition over the $H_l$ subgroup. 
Our strategy consists of two steps: detecting duplicates and preparing random permutations of duplicates in superposition.

\begin{algorithm}[H]
\caption{Subroutine $B_{h,t}(n_{h,1},n_{t,1},n_{h,2},n_{t,2})$}
        \label{alg:bht}
	\begin{algorithmic}
        \If{$h=t$}
        \State{$n_h\gets 1$}
        \State{$n_t\gets 1$}
        \Else
        \State{$h_1\gets h$}
        \State{$t_1\gets(h+t)/2$}
        \State{$h_2\gets (h+t)/2+1$}
	\State{$t_2\gets t$}
        \If{$l_{t_1}=l_{h_2}$}
            \If{$l_{t_1}\neq l_{h_1}$ AND $l_{h_2}\neq l_{t_2}$}
                \State $\dup(l)_{t_1 - n_{t,1}}\gets n_{t,1}+n_{h,2}$
                \State $n_h \gets n_{h,1}$
                \State $n_t \gets n_{t,2}$
            \ElsIf{$l_{t_1}\neq l_{h_1}$ AND $l_{h_2}=l_{t_2}$}
                \State $n_h \gets n_{h,1}$
                \State $n_t \gets n_{t,2}+n_{t,1}$
            \ElsIf{$l_{t_1} = l_{h_1}$ AND $l_{h_2} \neq l_{t_2}$}
                \State $n_h \gets n_{h,1}+n_{h,2}$
                \State $n_t \gets n_{t,2}$
            \ElsIf{$l_{t_1} = l_{h_1}$ AND $l_{h_2} = l_{t_2}$}
                \State $n_h \gets n_{h,1}+n_{h,2}$
                \State $n_t \gets n_{h,1}+n_{h,2}$
            \EndIf
        \Else
            \State $n_h \gets n_{h,1}$
            \State $n_t \gets n_{t,2}$
            \If{$1<n_{t,1}<(t-h)/2$}
                \State $\dup(l)_{t_1 - n_{t,1}} \gets n_{t,1}$
            \EndIf
            \If{$1<n_{h,2}<(t-h)/2$}
                \State $\dup(l)_{h_2} \gets n_{h,2}$
            \EndIf
        \EndIf
        \EndIf
        \end{algorithmic}
\end{algorithm}

\subsubsection{Duplicate detection algorithm}
We now describe an $\OT(\log n)$-depth quantum algorithm to detect all duplicated items in an ordered list. This algorithm takes as input $\ket{l}_1\ket{0}_2$ where $l$ is an NSIL and each integer in $l$ is assigned $O(\log n)$ qubits in register 2 to record how many times this integer is repeated in $l$. The output of this algorithm is $\ket{l}\ket{\dup(l)}$ where $\dup(l)$ records all repetitions: if there exists $n_i>1$ for $i$ such that $l_{i-1}<l_i=l_{i+1}=\dots=l_{i+n_i-1}<l_{i+n_i}$, then $\dup(l)_{i}=n_i$; for all other cases, $\dup(l)_{i}=0$. For example, if $l=(13333338)$, then $\dup(l)=(06000000)$.

We start by describing the classical algorithm to find $\dup(l)$, which can be quantized afterwards. The basic element of this algorithm is a ``block,'' denoted by $B$. Each block labeled by $h,t$ (``head'' and ``tail'') processes a range $[h,t]$ of $l$. The algorithm receives information from 2 blocks in the previous layer processing its sub-ranges $[h,(h+t)/2]$ and $[(h+t)/2+1,t]$, and checks whether there is a range of repeated elements across the 2 sub-ranges by checking whether $l[(h+t)/2] = l[(h+t)/2+1]$. Each block outputs 2 integers to the next layer, $n_h$ and $n_t$, which are the numbers of elements equal to $l_h$ and $l_t$, respectively, in the range $[h,t]$. The algorithm takes 4 integers from the previous layer, which are the output values of $B_{h,(h+t)/2}$, denoted by $n_{h,1}, n_{t,1}$, and of $B_{(h+t)/2+1,t}$, denoted by $n_{h,2},n_{t,2}$. A complete description of $B_{h,t}$ can be found in \cref{alg:bht}.

The whole algorithm contains $O(\log n)$ layers of blocks to find all ranges of duplicates bottom-to-top. A block in the $i$th layer of this algorithm processes a range of $2^{i-1}$ bits. To quantize the above duplicate detection algorithm, it suffices to quantize every block $B$ using a unitary $U_B$ with ancilla qubits.

\begin{figure}
    \centering
            \begin{quantikz}
                \lstick{$\ket{n_{h,1}}$} &\ctrl{4}&\qw &\\
                \lstick{$\ket{n_{t,1}}$} &\control{} &\qw &\\
                \lstick{$\ket{n_{h,2}}$} &\control{}&\qw&\\
                \lstick{$\ket{n_{t,2}}$} &\control{}&\qw &\\
                \lstick{$\ket{0}$} & \gate[3]{U_B} &\qw\rstick{$\ket{n_h}$} \\
                \lstick{$\ket{0}$} & &\qw\rstick{$\ket{n_t}$}\\
                \lstick{$\ket{0}$} & &\qw\rstick{$\ket{\dup(l)_{\frac{h+t}{2}-n_{t,1}}}$}\\
            \end{quantikz}
    \caption{Quantization of $B$.}
    \label{fig:enter-label}
\end{figure}

\subsubsection{Subgroup superposition preparation}
Now that the positions and sizes of repetitions have been recorded, we can modify the exact SIL symmetrization algorithm to prepare $\sum_{h\in H_l} \ket{h(12\dots n)}$ as follows. Note that $H_l$ only contains permutations of repeated items, and all other items must stay at their initial positions. We claim that $H_l$ corresponds to a specific class of LESs, denoted by $\LES_l$, which is a set of LESs of length $n$ satisfying the following conditions:
    \begin{itemize}
    \item if $l_{i-1}<l_i<l_{i+1}$, then $s_i = i$;
    \item if $l_{a-1}<l_a=l_{a+1}=\dots=l_{b}<l_{b+1}$, then $a \leq s_i \leq i$ for all $a\leq i \leq b$.
\end{itemize}
More formally,
\begin{equation}
    \LES_l := \{s : \forall i\in[n], \min_{l_j=l_i} j \leq s_i \leq \max_{l_j=l_i} j \}.
\end{equation}
\begin{theorem}
Each $h\in H_l$ can be generated from an $s\in \LES_l$, and each $s\in \LES_l$ can be computed from an $h\in H_l$, using the rules in \cref{appendix:newsil}.
\end{theorem}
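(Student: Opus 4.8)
The plan is to show that the Alonso–Schott bijection $\sigma \mapsto s^\sigma$ between $S_n$ and the set of all length-$n$ LESs (recalled in \cref{appendix:newsil}) restricts to a bijection between $H_l$ and $\LES_l$. Since $l$ is non-strictly increasing, the positions sharing a common value form contiguous blocks; I write $[a,b]$ for a maximal interval with $l_a = \cdots = l_b$, so that $\min_{l_j = l_i} j = a$ and $\max_{l_j = l_i} j = b$ whenever $i \in [a,b]$. A permutation satisfies $h(l) = l$ precisely when it stabilizes each such block setwise, hence $H_l$ is the internal direct product of the symmetric groups on the blocks and $|H_l| = \prod_{[a,b]} (b-a+1)!$. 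On the other side, every element of $\LES_l$ is in particular an LES, so $s_i \le i$ always holds; combined with the lower bound $s_i \ge a$ this gives $a \le s_i \le i$ for $i \in [a,b]$, and thus $s_i$ ranges freely over the $i - a + 1$ values $\{a, \dots, i\}$. Multiplying over one block gives $(b-a+1)!$ and over all blocks gives $|\LES_l| = |H_l|$. Because the global map $\sigma \mapsto s^\sigma$ is a bijection, it is injective on $H_l$, so once I show the forward inclusion $s^h \in \LES_l$ for every $h \in H_l$, the equality of cardinalities forces this injection to be onto $\LES_l$, establishing both directions of the claim at once.

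To prove the forward inclusion, I would fix $h \in H_l$ and $i \in [a,b]$ and bound $s^h_i = |\{j \le i : h^{-1}(j) \le h^{-1}(i)\}|$ directly from the block structure. Since $h$ stabilizes each block, so does $h^{-1}$, whence $h^{-1}(i) \in [a,b]$ and in particular $h^{-1}(i) \ge a$. Every index $j < a$ lies in an earlier block and has the same value under $h^{-1}$, so $h^{-1}(j) < a \le h^{-1}(i)$; all $a - 1$ such indices are therefore counted. The remaining candidates are $j \in [a,i]$, among which $j = i$ is always counted and at most all $i - a + 1$ can be, giving $a \le s^h_i \le i$, i.e.\ $s^h \in \LES_l$. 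The singleton case $l_{i-1} < l_i < l_{i+1}$ is the specialization $a = b = i$: then $h$ fixes $i$, and the count collapses to $s^h_i = i$, matching the stated condition exactly.

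I expect the main difficulty to be bookkeeping rather than conceptual depth. The first thing to nail down is the index convention, so that the diagrammatic rule and the closed form $s^\sigma_i = |\{j : \sigma^{-1}(j) \le \sigma^{-1}(i),\ j \le i\}|$ of \cref{appendix:newsil} genuinely agree, and so that ``$h$ preserves $l$'' is correctly translated into ``$h$ stabilizes each contiguous value-block''; the block structure is precisely what makes the counting split cleanly across blocks. Once these conventions are fixed, the cardinality argument supplies surjectivity onto $\LES_l$ for free, which is cleaner than building an explicit inverse by hand. If a constructive reverse map is nevertheless wanted, I would instead run the LES-to-permutation procedure of \cref{appendix:newsil} on an arbitrary $s \in \LES_l$ and verify block-by-block that the recovered permutation only rearranges positions within blocks, but I would present the counting route as the main argument.
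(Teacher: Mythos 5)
Your proof is correct and follows essentially the same route as the paper's: both establish $|H_l| = |\LES_l|$ by the same block-by-block counting argument and then verify the forward inclusion $s^h \in \LES_l$ so that the global LES--permutation bijection restricts to a bijection between $H_l$ and $\LES_l$ by cardinality. The only cosmetic difference is that you bound $s^h_i$ via the closed-form counting formula for $s^\sigma_i$ while the paper counts filled blocks in the permutation-LES diagram, which is the same computation.
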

\begin{proof}
    To show that there is a one-to-one mapping, we verify that $|H_l|=|\LES_l|$. Suppose $V_l$ is the set of all different elements in $l$, and each $v\in V_l$ appears $n_v$ times in $l$. Then
    \begin{equation}
        |H_l| = \prod_{v \in V_l} n_v!.
    \end{equation}
    Similarly, for any $v$ with $n_v=1$ and $l_i = v$, the corresponding $s \in \LES_l$ must satisfy $s_i = i$. If $n_v>1$, then there are $n_v$ $v$s in $l$ and there are $n_v!$ possible values allowed. Therefore, we also have
    \begin{equation}
        |\LES_l| = \prod_{v \in V_l} n_v! = |H_l|.
    \end{equation}

    One can use the rules in \cref{appendix:newsil} to calculate the corresponding $s^h\in \LES_l$ from an $h\in H_l$ as follows. Since permutations only exist between repeated elements, and repeated elements are adjacent in $l\in 
    \NSIL_n$, we observe that 
    \begin{equation}
        \min_{l_j=l_i} j \leq h(12\dots n)_i \leq\max_{l_j=l_i} j,
    \end{equation}
    and
    \begin{equation}
        h(12\dots n)_j < h(12\dots n)_i < h(12\dots n)_k
    \end{equation}
    if $l_j<l_i<l_k$. Recall that $s^h_{h(12\dots)_i}$ in the permutation-LES diagram of $h(12\dots n)$ is determined by counting the number of filled blocks from $(1,1)$ to $(h(12\dots)_i,i)$. The equation above implies that for all $l_j<l_i$, $h(12\dots)_j$ must be between $(1,1)$ and $(h(12\dots)_i,i)$, so 
    \begin{equation}
        s^h_i \geq 1 + \max_{l_j<l_i} j = \min_{l_j=l_i} j.
    \end{equation}
    For all $l_j=l_i$, since permutations are allowed, all those repeated items may be included in the range, so 
    \begin{equation}
        s^h_i \leq n_{l_i} + \max_{l_j<l_i} j = \max_{l_j=l_i} j.
    \end{equation}
    Therefore, $s^h \in \LES_l$. This immediately implies that one can use the rules in \cref{appendix:newsil} to find the corresponding $h\in H_l$ from $s^h$.
 \end{proof}

Therefore, to produce a superposition over $H_l$, it suffices to employ the SIL symmetrization algorithm with the LES state preparation step modified such that the initial state is
\begin{equation}
    \sum_{l\in \NSIL_n} \alpha_l \ket{l} \sum_{s\in \LES_l}\ket{s}.
\end{equation}
Since the pattern of repetitions in $l$ has been found, this state can still be prepared as a product state conditioned on the repetitions in $l$. The circuit depth is $O(1)$.

\section{Efficient conversion from second quantization to first quantization}
\label{sec:binaryencoder}
To convert any given second-quantized state to its first-quantized counterpart, we first transform the second-quantized state to an NSIL before employing the (anti-)symmetrization algorithm on the increasing lists. To do so, we design a novel reversible circuit that takes as input a list of occupation numbers (i.e., a second-quantized representation) and outputs an increasing list of positions (or modes) occupied, where each mode is assigned with a number written in its binary representation. More specifically, our circuit transforms $W_0:= n_0 n_1 \dots n_{m-1}$ (where $n_i\geq 0$ is the occupation number of mode $i$) into $W_f := 0^{n_0} 1^{n_1} \dots (m-1)^{n_{m-1}}$. Our approach is based on reversible sorting networks and takes $\OT(\log^2 n)$ depth where $n=\sum_{i=0}^{m-1} n_i$.
The main idea for converting $W_0$ to $W_f$ reversibly is to first generate $n$ empty registers; then use sorting networks to move $n_i$ of them in between of the $(i-1)$th register and the $i$th register, for all $i\in[m]$; and use the new positions of the empty registers to compute the index $i$ locally for each of them. In more detail, the algorithm proceeds as follows.

First, we calculate the prefix sums of $W_0$ and produce the list of $2$-tuples:
\begin{equation}
    W_1 = (n_0,n_0) (n_1,n_0+n_1) (n_2,n_0+n_1+n_2) \dots (n_{m-1},n).
\end{equation}
This can be done in $O(\log^2 n)$ depth as described in \cref{subsec:prefix}. We also add indices to all elements in $W_1$, giving a $3$-tuple list
\begin{equation}
\begin{aligned}
    \tilde{W}_1 &= (0,n_0,n_0) (1,n_1,n_0+n_1) (2,n_2,n_0+n_1+n_2)\\ 
    &\quad\dots (m-1,n_{m-1},n).
\end{aligned}
\end{equation}

Next, since the output list contains $n$ integers, we add another $n$ elements with $(m,0,j)$ where $j\in\{0,1,\dots,n-1\}$ is the index of the new element added. The resulting list is
\begin{equation}
\begin{aligned}
    \tilde{W}_2 &= (0,n_0,n_0) (1,n_1,n_0+n_1) \dots\\
    &\quad (m-1,n_{m-1},n) (m,0,0) (m,0,1) \dots (m,0,n-1).
\end{aligned}
\end{equation}
Note that $\tilde{W}_2$ is $\Gamma_{W_2}$-sorted where $(a_1,b_1,c_1) <_{\Gamma_{W_2}} (a_2,b_2,c_2)$ if and only if either of the following 2 conditions is satisfied:
\begin{itemize}
    \item $a_1<a_2$;
    \item $a_1=a_2=m$ AND $c_1<c_2$.
\end{itemize}

Now, using reversible sorting networks, we sort $\tilde{W}_2$ to obtain
\begin{equation}
\begin{aligned}
    \tilde{W}_3 &= (m,0,0)(m,0,1)\dots (m,0,n_0-1) (0,n_0,n_0) (m,0,n_0)\\
    &\quad \dots (m,0,n_0+n_1-1) (1,n_1,n_0+n_1) (m,0,n_0+n_1) \\ 
    &\quad \dots (m,0,n-1) (m-1,n_{m-1},n),
\end{aligned}
\end{equation}
which is $\Gamma_{W_3}$-sorted where $(a_1,b_1,c_1)<_{\Gamma_{W_3}} (a_2,b_2,c_2)$ if and only if either of the following 2 conditions is satisfied:
\begin{itemize}
    \item $c_1<c_2$,
    \item $c_1 = c_2$ AND $a_1<a_2$.
\end{itemize}

Now, we have allocated $n_i$ such elements (with first element being $m$) between $(i-1,n_{i-1},\sum_{j=0}^{i-1}n_j)$ and $(i,n_i,\sum_{j=0}^{i}n_j)$, and we need to assign the value $i$ to each of them. For the $k$th element in $\tilde{W}_3$, if the first value is $m$ (meaning that this is a newly added element) and the third value is $i_k$ (the index of new elements), then the corresponding $i$ value is simply $k-i_k$ (the difference between the index in $\tilde{W}_3$ and the index in all new elements). We add this value to the second value for each new element, obtaining
\begin{equation}
\begin{aligned}
    \tilde{W}_4 &= (m,0,0)(m,0,1)\dots (m,0,n_0-1) (0,n_0,n_0) \\
    &\quad (m,1,n_0) \dots (m,1,n_0+n_1-1) (1,n_1,n_0+n_1) \\ 
    &\quad (m,2,n_0+n_1) \dots (m,m-1,n-1) \\
    &\quad (m-1,n_{m-1},n).
\end{aligned}
\end{equation}
$\tilde{W}_4$ is also $\Gamma_{W_3}$-sorted since we only modify the second values, which do not play a role in the $\Gamma_{W_3}$ comparison rule.
Now, if we only consider elements whose first value is $m$, we have already obtained the desired $W_f$ list in their second values. Hence, in the final stage, we remove all other elements reversibly. We first sort $\tilde{W}_4$ under $\Gamma_{W_2}$ to group together elements with first values less than $m$, giving
\begin{equation}
\begin{aligned}
    \tilde{W}_5 &= (0,n_0,n_0) (1,n_1,n_0+n_1)\\
    &\quad\dots (m-1,n_{m-1},n) (m,0,0) \dots.
\end{aligned}
\end{equation}
We can first remove all second values for the first $m$ elements by subtracting the difference between $\sum_{j=0}^i n_j$ and $\sum_{j=0}^{i-1}n_j$, which is equal to $n_i$. For $i=0$, it is trivial to implement $(0,n_0,n_0)\mapsto(0,0,n_0)$. This gives us
\begin{equation}
    \tilde{W}_6 = (0,0,n_0) (1,0,n_0+n_1) \dots (m-1,0,n) (m,0,0)\dots,
\end{equation}
which is still $\Gamma_{W_2}$-sorted. We sort it again under $\Gamma_{W_3}$ to obtain
\begin{equation}
\begin{aligned}
    \tilde{W}_7 &= (m,0,0)\dots (m,0,n_0-1) (0,0,n_0) (m,1,n_0)\dots\\
    &\quad (m,1,n_0+n_1-1) (1,0,n_0+n_1) (m,2,n_0+n_1) \\ 
    &\quad\dots (m,m-1,n-1) (m-1,0,n).
\end{aligned}
\end{equation}
Next, we notice that $(i,0,\sum_{j=0}^i n_i)$ must be the $(\sum_{j=0}^i n_i +i)$th element in $\tilde{W}_7$ for $i\in\{0,1,m-1\}$. We can therefore subtract all $\sum_{j=0}^i n_i$ values and obtain
\begin{equation}
\begin{aligned}
    \tilde{W}_8 &= (m,0,0)\dots (m,0,n_0-1) (0,0,0) (m,1,n_0)\dots\\
    &\quad (m,1,n_0+n_1-1) (1,0,0) (m,2,n_0+n_1) \dots \\
    &\quad (m,m-1,n-1) (m-1,0,0).
\end{aligned}
\end{equation}
Next, we can sort $\tilde{W}_8$ again under $\Gamma_{W_2}$ to obtain
\begin{equation}
\begin{aligned}
    \tilde{W}_f &= (0,0,0)\dots(m-1,0,0) (m,0,0)\dots \\
    &\quad (m,0,n_0-1) (m,1,n_0)\dots(m,1,n_0+n_1-1)\dots\\
    &\quad (m,m-1,n-1).
\end{aligned}
\end{equation}
We now have a list where the first $m$ elements are just counting up to $m$, the first values of all remaining elements are $m$, and the third values of the remaining elements are all just the indices. All of this unnecessary information can be reversibly removed, yielding the desired output list
\begin{equation}
    W_f = 0^{n_0} 1^{n_1} \dots (m-1)^{n_{m-1}}.
\end{equation}
Since all operations in this process are either reversible sorting or reversible local data modification, the whole process including symmetrization takes $\OT(\log^3 n)$ depth and can be implemented as a reversible circuit.

Due to the reversibility of this algorithm, we can also transform a (superposition) of NSILs back to a second-quantized state by inverting the above procedure. Since our NSIL symmetrization algorithm is also reversible, combing both algorithms gives a mechanism to efficiently convert between first quantization and second quantization.

\section{Application: quantum interferometric imaging system}
\label{sec:inter}
\subsection{QFT-based single-photon interferometry}
\label{subsec:singlephoton}
We review quantum telescope arrays \cite{gottesman2012longer,khabiboulline2019quantum} for the low-photon-rate scenario with the help of the schematic diagram in \cref{fig:telescope}. For simplicity, we work in an idealized model of a 2-dimensional world where all $m$ telescopes are equally spaced in a line. Each telescope $T_0,T_1,\dots,T_{m-1}$ has a lens (or mirror) to collect photons and a well-isolated photon detector equipped with a bosonic quantum memory (since photons are bosons).
The distance between adjacent telescopes is $d$. The photon source (which may be a star, galaxy, etc.) is denoted by $S$ and is assumed to be infinitely far away from the telescope array, so its position can be described using a single parameter $\theta$, the angle between $T_0 S$ and $SP$ where $P$ is the nearest point to $S$ on the telescope plane. Since both $d$ and $m$ are finite, the following angle relation can be derived from trigonometry:
\begin{equation}
    \theta = \angle T_0 T_1 M = \angle T_0 S P = \angle T_1 S P =\dots = \angle T_{m-1} S P.
\end{equation}

\begin{figure}[t]
    \centering
    \includegraphics[width=0.9\linewidth]{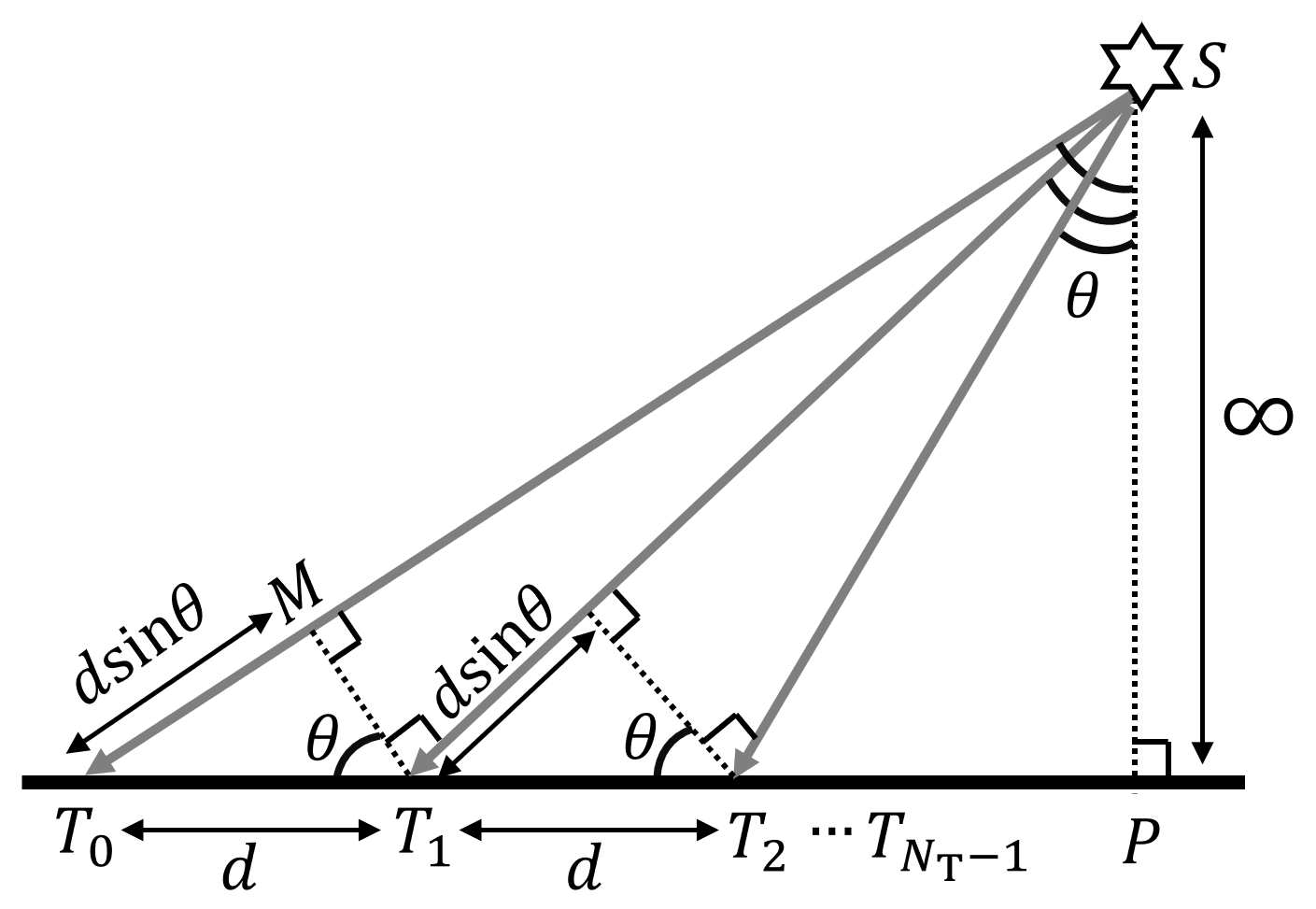}
    \caption{In this idealized 2-D model, since $S$ is infinitely far away from the telescope array, $\theta = \angle T_0 T_1 M = \angle T_0 S P = \angle T_1 S P $. Therefore, the path difference between $ST_j$ and $ST_{j+1}$ is $d\sin\theta$.}
    \label{fig:telescope}
\end{figure}

A photon emitted by $S$ may finally be received by one of the $m$ detectors, but the path length gone through by the photon varies for different detectors. For two adjacent detectors in the array, the path difference is $d\sin\theta$ by trigonometry, as shown in \cref{fig:telescope}. The path difference leads to a phase difference, $2\pi d\sin\theta / \lambda$. If we let the phase at detector $T_0$ be 0, then the phase at detector $T_j$ is $2\pi dj \sin\theta / \lambda$. Since the photon may arrive at any detector in the array, we can define the creation operator of the array $a^\dagger(\theta)$ with angle $\theta$ in terms of the creation operator of the memory of the $j$th detector, $b^\dagger_j$, for all $j\in\{0,1,\dots,m-1\}$: 
\begin{equation}
    a^\dagger(\theta) = \sum_{j=0}^{m-1} \exp(\imag \frac{2\pi  dj \sin\theta}{\lambda}) b^\dagger_{j}.
\end{equation}
Note that $[b^\dagger_i,b^\dagger_j]=0$, $[b_i,b^\dagger_j]=\delta_{ij}$, and $b^\dagger_i \ket{v_i}_i = \sqrt{v_i+1}\ket{v_i+1}_i$ since we are using a bosonic memory.

The initial state of the quantum memories is $\ket{0}_0\ket{0}_1\dots \ket{0}_{m-1}$, also written as $\ket{00\dots 0}$ for short. The state after a single photon detection is
\begin{equation}
\label{eq:single_photon_state}
\begin{aligned}
    &a^\dagger(\theta)\ket{00\dots 0} \\
    =&\frac{1}{\sqrt{m}} \sum_{j=0}^{m-1} e^{\imag \frac{2\pi  dj \sin\theta}{\lambda}} \ket{0}^{\otimes j-1} \ket{1}_j \ket{0}^{m-j},
\end{aligned}
\end{equation}
where the state is represented using second quantization. This is a natural representation because the photon detectors and their corresponding quantum memories are far from each other and the occupation numbers can be stored locally.

For most optical-wavelength astronomical observations, the number of photons received per time step is much less than $1$. In this case, we need only be able to process the above state. Note that the only information we need from this state is the angle $\theta$, since an image is essentially a photon distribution over angles. We notice that there is a periodicity in the phases depending on the $\theta$, so the QFT is a natural tool to extract the information. To perform a QFT, we can transform the state into a first-quantized binary representation:
\begin{equation}
    \ket{0}^{\otimes j-1} \ket{1}_j \ket{0}^{m-j} \mapsto \ket{j}
\end{equation}
where $j$ is represented by up to $\log(m)$ qubits. To do so, we first bring the distributed quantum memories together, which can be done by either physically moving the qubits
or using quantum repeaters. Next, we run a standard unary-to-binary encoder in $\log(m)$ depth. The transformed state is simply
\begin{equation}
    \ket{\psi(\theta)} = \frac{1}{\sqrt{m}} \sum_{j=0}^{m-1} \exp(\imag \frac{2\pi  dj \sin\theta}{\lambda}) \ket{j}.
\end{equation}
We can now perform the QFT on this state to obtain
\begin{equation}
    \frac{1}{m} \sum_{j=0}^{m-1} \sum_{k=0}^{m-1} \exp(2\pi \imag j \left( \frac{ d\sin\theta}{\lambda} -\frac{ k}{m}\right)) \ket{k}.
\end{equation}
Observe that if 
\begin{equation}
     \frac{ d\sin\theta}{\lambda} -\frac{ k}{m} = v
\end{equation}
where $v$ can be any integer, then the Fourier-transformed state is simply $\ket{k}$, where
\begin{equation}
\label{eq:kdefinition}
    k \left( \frac{\lambda}{m d} \right) = \sin\theta - \frac{v\lambda}{d}.
\end{equation}
The physical meaning of $k$ is now straightforward: if the telescope array only receives photons from $\sin\theta = \frac{v\lambda}{d}$ to $\sin\theta = \frac{(v+1)\lambda}{d}$ (which can be enforced by the lens/mirror in front of the photon detector), then $k$ is the difference between $\sin\theta$ and $\frac{v\lambda}{d}$ in terms of $\frac{\lambda}{m d}$. Here $m d$ is the \emph{baseline}, $\frac{\lambda}{d}$ is the \emph{field of view}, and $\frac{\lambda}{m d}$ is the \emph{angular resolution}, which saturates the Rayleigh criterion. Since it is not necessary to physically bring photons together for interference, this method can be applied to longer baselines and shorter wavelengths (infrared or optical) than classical interferometry, potentially leading to micro-arcsecond angular resolution. Furthermore, since this telescope array can handle single photon-events, for photon rate $\epsilon$, it has sensitivity $O(\epsilon)$, much higher than the $O(\epsilon^2)$ sensitivity of intensity interferometers. 

Note that in the discussions of quantum telescope arrays in this paper, we assume there are no atmospheric fluctuations that may produce local phase differences. This may be reasonable if the telescope array is space-based, but for ground-based telescopes, such phases are inevitable and may affect image quality, since the phase difference between the $i$th and the $j$th telescopes is no longer strictly $2\pi d (j-i) \sin\theta /\lambda$. These fluctuations can potentially be corrected using classical methods such as adaptive optics \cite{beckers1993adaptive}, as mentioned in Ref.~\cite{khabiboulline2019opticalprl}.

\subsection{Multiple-photon interferometry}
\label{subsec:multiplephoton}
So far, we have only discussed the case where the detected state has the form $a^\dagger(\theta)\ket{00\dots 0}$. More generally, it is natural to consider multiple-excitation states of the form
\begin{equation}
    \ket{\psi(\theta_1,\dots,\theta_{n})} = \prod_{j=1}^{n} a^\dagger(\theta_j) \ket{00\dots 0},
\end{equation}
where $n$ is the number of photons received by the telescope array in one time step, and $\theta_j$ is the angle of the $j$th photon.

This multiple-photon scenario is not relevant to the most common situations in optical or infrared interferometry in astronomy. This is mainly because the signal from a distant astronomical object is very weak when it arrives at the telescope array. It is also related to the observed wavelengths. Typically, astronomical sources emit photons of all wavelengths. At radio or microwave wavelengths, since the energy per photon ($hc/\lambda$) is low, the number of photons per mode can be more than 1. However, very long baseline interferometry has been realized in these wavelengths, making quantum interferometry unnecessary. The more astronomically interesting regime is optical and infrared wavelengths, for which we design quantum telescopes arrays to perform interferometry. However, the photon rate in this case is usually much lower than 1 per mode due to the much higher energy per photon.

However, we claim that the photon rate can be much higher when one or more of the following conditions hold: 
\begin{enumerate}[noitemsep]
    \item the number of detectors $m$ is large;
    \item the area of each aperture is large;
    \item the field of view is large (which can be achieved when $d$ is very small);
    \item the photon wavelength is long; or
    \item the source is bright.
\end{enumerate}
These conditions might be met in more general imaging tasks beyond observational astronomy, so we call a quantum ``telescope" array working under these conditions an \emph{interferometric imaging system}. Note that even under these conditions, we could be in a regime where the number of photons is much less than the number of modes, making the first-quantized representation advantageous.

Now, we outline how our NSIL symmetrization algorithm makes it possible to extract all $\theta_j$ from $\ket{\psi(\theta_1,\dots,\theta_{n})}$ using only one copy of the state using \cref{result:2}.
Let
\begin{equation}
    \phi_{i,j} := \frac{2\pi dj \sin\theta_i}{\lambda}.
\end{equation}
The explicit multiple-photon state is
\begin{equation}
\begin{aligned}
    &\prod_{j=1}^{n} a^\dagger(\theta_j) \ket{00\dots 0}\\
    &\propto \sum_{r\in [m]^{n}} \exp(\imag \sum_{i=1}^{n} \phi_{i,r_i}) b_{r_1}^\dagger b_{r_2}^\dagger \dots b_{r_{n}}^\dagger \ket{00\dots 0},
\end{aligned}
\end{equation}
where $r$ is a list of integers whose $i$th element $r_i$ represents at which telescope the $i$th photon is detected. Now, letting $n_i$ be the number of photons detected at telescope $i$ in $r$,
\begin{equation}
\label{eq:detected_state}
    b_{r_1}^\dagger b_{r_2}^\dagger \dots b_{r_{n}}^\dagger \ket{00\dots 0} = \sqrt{\prod_{i=0}^{m-1} n_i!} \ket{n_0 n_1 \dots n_{m-1}}
\end{equation}
due to the bosonic nature of the quantum memory. This is the second-quantized representation of the photon state.

As with the single-photon case, we convert to the first-quantized representation in order to apply the quantum Fourier transform.
To do so, as described in \cref{sec:binaryencoder}, we first transform $\ket{n_0 n_1 \dots n_{m-1}}$ into a list $l$ of telescope indices in increasing order. This transformation circuit has $\OT(\log^2 n)$ depth and uses $O(n\log m)$ ancilla qubits.
As a simple example, if $m=3$, $n=4$, and $r=2101$, then the state in Equation~\eqref{eq:detected_state} is $\ket{121}$, which will be transformed to $\ket{l(r)} = \ket{0112}$. Here $l(r)=0112$ is an NSIL as it is the sorted version of $r$. We also notice that the factor $\sqrt{\prod_{i=0}^{m-1} n_i!}$
is just $\sqrt{|H_{l(r)}|}$, square root of the order of the subgroup of $S_{n}$ preserving $l(r)$. We can thus write the transformed unnormalized state as
\begin{equation}
\begin{aligned}
    \sum_{r\in [m]^{n}} \exp(\imag \sum_{i=1}^{n} \phi_{i,r_i}) \sqrt{\left| H_{l(r)} \right|}\ket{l(r)}. 
\end{aligned}
\end{equation}

Next, we can apply the NSIL symmetrization algorithm in \cref{result:2} to obtain the unnormalized first-quantized state
\begin{equation}
\label{eq:psis}
\begin{aligned}
&\ket{\psi_\mathrm{S}(\theta_1,\dots,\theta_{n})}\\
    &= \sum_{r\in [m]^{n}} \exp(\imag \sum_{i=1}^{n} \phi_{i,r_i}) \sqrt{\frac{\left| H_{l(r)} \right|}{\left| S_{l(r)} \right|}} \sum_{\sigma'\in S_{l(r)}} \ket{\sigma'(l(r))}\\
    &\propto \sum_{r\in [m]^{n}} \exp(\imag \sum_{i=1}^{n} \phi_{i,r_i}) \left| H_{l(r)} \right| \sum_{\sigma'\in S_{l(r)}} \ket{\sigma'(l(r))},
\end{aligned}
\end{equation}
where we use $\left| S_{n} \right| = \left| S_{l(r)} \right| \cdot \left| H_{l(r)} \right| $ in the last line.

To make it easier to analyze this state, we rewrite it as a superposition over $r \in [m]^{n}$, i.e.,
\begin{equation}
    \ket{\psi_\mathrm{S}(\theta_1,\dots,\theta_{n})} \propto \sum_{r\in [m]^{n}} \beta_{r} \ket{r}
\end{equation}
where $\beta_r$ is the coefficient corresponding to $\ket{r}$. We notice that if $r'$ satisfies $l(r') = l(r)$, then its phase in Equation~\eqref{eq:psis} contributes to $\beta_{r}$, and it can be related to $r'$ by $r' = \sigma'(r)$ where $\sigma' \in S_{r}$. Therefore,
\begin{equation}
\begin{aligned}
    \beta_{r} &= \left| H_{l(r)} \right|  \sum_{\sigma \in S_{r}}  \exp(\imag \sum_{i=1}^{n} \phi_{i,\sigma'(r)_i})\\
    &= \sum_{\sigma \in S_{n}}  \exp(\imag \sum_{i=1}^{n} \phi_{i,\sigma(r)_i})
\end{aligned}
\end{equation}
where in the second line we replace the summation over $S_{r}$ by that over $S_{n}$ and use the degeneracy in $S_{n}$ when acting on $r$, i.e., $|H_{l(r)}|\cdot |S_{r}| = |S_{n}|$. We also observe that the sum is preserved under a permutation of photon indices, so
\begin{equation}
\begin{aligned}
    \beta_r &= \sum_{\sigma \in S_{n}}  \exp(\imag \sum_{i=1}^{n} \phi_{\sigma^{-1}(12\dots n)_i,\sigma^{-1}\circ \sigma(r)_i})\\
    &= \sum_{\sigma \in S_{n}}  \exp(\imag \sum_{i=1}^{n} \phi_{\sigma(12\dots n)_i,r_i}),
\end{aligned}
\end{equation}
where in the last line we replace $\phi_{\sigma^{-1}(12\dots n)_i,r_i}$ by $\phi_{\sigma(12\dots n)_i,r_i}$ because $\sigma^{-1}$ is unique for each $\sigma$ and $\sigma^{-1}\in S_{n}$.

Finally, we rewrite
\begin{equation}
\begin{aligned}
    &\ket{\psi_\mathrm{S}(\theta_1,\dots,\theta_{n})}\\
    &\propto \sum_{r\in [m]^{n}} \sum_{\sigma \in S_{n}}  \exp(\imag \sum_{i=1}^{n} \phi_{\sigma(12\dots n)_i,r_i})\ket{r}\\
    &= \sum_{\sigma \in S_{ n}} \sum_{r\in [m]^{ n}}   \exp(\imag \sum_{i=1}^{ n} \phi_{\sigma(12\dots n)_i,r_i}) \ket{r}\\
    &= \sum_{\sigma \in S_{ n}} \bigotimes_{i=1}^{ n} \left( \sum_{r_i\in[m]} \exp(\imag \phi_{\sigma(12\dots n)_i,r_i}) \ket{r_i} \right).
\end{aligned}
\end{equation}
All $\theta_i$ can be extracted by running the QFT on all $ n$ registers in parallel and measuring in the computational basis:
\begin{equation}
\mathrm{QFT}^{ n} \ket{\psi_\mathrm{S}(\theta_1,\dots,\theta_{ n})} = \frac{1}{\sqrt{n!}} \sum_{\sigma \in S_{ n}}  \bigotimes_{i=1}^{ n} \ket{k_{\sigma(12\dots  n)_i}}
\end{equation}
where $k_{\sigma(12\dots  n)_i} = \frac{md}{\lambda} \left(\sin\theta_{\sigma(12\dots  n)_i} - \frac{v\lambda}{d}\right)$ with $v\in \mathbb{Z}$ determined by the direction of the telescope, as in Equation~\eqref{eq:kdefinition}.

Thus, with the help of the NSIL symmetrization algorithm, a single-photon quantum telescope array can be upgraded to an interferometric imaging system capable of processing multiple photons at a time. The quantum computation for $n$ photons on $m$ detectors has depth $\poly(\log n)$ and uses $O(m\log n + n\log m)$ qubits. Capturing multiple photons may allow imaging systems to work for more general purposes, including high-resolution observation of the sun and astronomical surveys. Our scheme may also be particularly useful in some specific wavelengths, such as the far infrared, where the photon rate is close to $1$ but classical very long baseline interferometry is not practical.

Another potential application in the far future is photography in dark conditions.
Note that to achieve field of view $\frac{\lambda}{d}$ with angular resolution $\frac{\lambda}{Nd}$, it suffices to have an $N\times N$ array of low-resolution lenses/mirrors with aperture radii much shorter than $d$, instead of a single lens/mirror with aperture $Nd$. Manufacturing large-aperture lenses is difficult and expensive even for ordinary photography, which can be avoided with our imaging system at the cost of a quantum information processor and well-isolated photon detectors.

It is also worth mentioning that instead of directly producing images, quantum telescope arrays can offer better accuracy in estimating specific parameters, such as the distance between binary stars, as introduced in Ref.~\cite{sajjad2024quantum}. Our multiple-photon interferometric imaging system can potentially extend the list of high-precision parameter estimations.

\section{Summary \& Discussion}
\label{sec:summary}
In this paper, we studied the general quantum symmetrization problem. We identified subtleties with repetitions in the input list and pointed out issues with previous SIL symmetrization algorithms when acting on NSILs.
Using sorting networks, we then proposed the first logarithmic-depth quantum algorithms for symmetrizing NSILs. These algorithms leverage the coset decomposition of the symmetric group, use an ancilla register to mediate multiplications of permutations elements, and adapt SIL symmetrization algorithms as subroutines. The single-input symmetrization algorithm features $\OT(\log n)$ depth and uses $O(n\log n + \log m)$ ancilla qubits for NSILs of $n$ integers with value up to $m$, while the superposed-input symmetrization algorithm uses $\OT(\log^3 n)$ depth and $O(n\log n)$ ancilla qubits.
We also proposed an $\OT(\log^2 n)$-depth algorithm to transform between second-quantized states and $\NSIL$s, which allows for efficient conversion between second-quantized states and first-quantized states.

Our algorithms finally enable first-quantized simulation of bosons by giving a mechanism for preparing symmetrized initial states. The NSIL symmetrization algorithm can be used to prepare (superpositions of) Dicke states of any Hamming weight in logarithmic depth using a reasonable number of ancilla qubits. The efficient conversion algorithm between first quantization and second quantization provides a mechanism for processing multiple photons at a time in the quantum telescope array proposal.

We distinguish between the scenario in our paper and in other algorithmic applications of symmetrization. It is well known that the graph isomorphism problem can be solved by creating a superposition over all permutations of a given graph \cite{ambainis2011symmetry}. However, our results cannot be used in that context since there is no increasing input guarantee in the setting of graph isomorphism. Indeed, if the vertices of a graph could be ordered in a canonical way, this would solve graph isomorphism.

Our algorithm requires all-to-all connectivity due to the sorting networks used, just as in Ref.~\cite{berry2018improved}. A natural open problem is to find low-depth sorting networks with restricted connectivity, which could make our algorithm much more favorable for near-term devices.
It might also be possible to reduce the circuit depth of the superposed-input symmetrization algorithm in \cref{result:2} or increase the success probability of the algorithm in \cref{result:1}.

\section*{Acknowledgments}
We thank Alexey Gorshkov, Emil Khabiboulline, Cheng-Ju Lin, and Yuan Su for helpful discussions.
We acknowledge support from the National Science Foundation (QLCI grant OMA-2120757).

\bibliography{ref}

\begin{thebibliography}{37}%
\makeatletter
\providecommand \@ifxundefined [1]{%
 \@ifx{#1\undefined}
}%
\providecommand \@ifnum [1]{%
 \ifnum #1\expandafter \@firstoftwo
 \else \expandafter \@secondoftwo
 \fi
}%
\providecommand \@ifx [1]{%
 \ifx #1\expandafter \@firstoftwo
 \else \expandafter \@secondoftwo
 \fi
}%
\providecommand \natexlab [1]{#1}%
\providecommand \enquote  [1]{``#1''}%
\providecommand \bibnamefont  [1]{#1}%
\providecommand \bibfnamefont [1]{#1}%
\providecommand \citenamefont [1]{#1}%
\providecommand \href@noop [0]{\@secondoftwo}%
\providecommand \href [0]{\begingroup \@sanitize@url \@href}%
\providecommand \@href[1]{\@@startlink{#1}\@@href}%
\providecommand \@@href[1]{\endgroup#1\@@endlink}%
\providecommand \@sanitize@url [0]{\catcode `\\12\catcode `\$12\catcode `\&12\catcode `\#12\catcode `\^12\catcode `\_12\catcode `\%12\relax}%
\providecommand \@@startlink[1]{}%
\providecommand \@@endlink[0]{}%
\providecommand \url  [0]{\begingroup\@sanitize@url \@url }%
\providecommand \@url [1]{\endgroup\@href {#1}{\urlprefix }}%
\providecommand \urlprefix  [0]{URL }%
\providecommand \Eprint [0]{\href }%
\providecommand \doibase [0]{https://doi.org/}%
\providecommand \selectlanguage [0]{\@gobble}%
\providecommand \bibinfo  [0]{\@secondoftwo}%
\providecommand \bibfield  [0]{\@secondoftwo}%
\providecommand \translation [1]{[#1]}%
\providecommand \BibitemOpen [0]{}%
\providecommand \bibitemStop [0]{}%
\providecommand \bibitemNoStop [0]{.\EOS\space}%
\providecommand \EOS [0]{\spacefactor3000\relax}%
\providecommand \BibitemShut  [1]{\csname bibitem#1\endcsname}%
\let\auto@bib@innerbib\@empty
\bibitem [{\citenamefont {Feynman}(1982)}]{Feynman_1982}%
  \BibitemOpen
  \bibfield  {author} {\bibinfo {author} {\bibfnamefont {R.~P.}\ \bibnamefont {Feynman}},\ }\href@noop {} {\bibfield  {journal} {\bibinfo  {journal} {International Journal of Theoretical Physics}\ }\textbf {\bibinfo {volume} {21}},\ \bibinfo {pages} {467–488} (\bibinfo {year} {1982})}\BibitemShut {NoStop}%
\bibitem [{\citenamefont {Abrams}\ and\ \citenamefont {Lloyd}(1997)}]{abrams1997simulation}%
  \BibitemOpen
  \bibfield  {author} {\bibinfo {author} {\bibfnamefont {D.~S.}\ \bibnamefont {Abrams}}\ and\ \bibinfo {author} {\bibfnamefont {S.}~\bibnamefont {Lloyd}},\ }\href@noop {} {\bibfield  {journal} {\bibinfo  {journal} {Physical Review Letters}\ }\textbf {\bibinfo {volume} {79}},\ \bibinfo {pages} {2586} (\bibinfo {year} {1997})}\BibitemShut {NoStop}%
\bibitem [{\citenamefont {Ward}\ \emph {et~al.}(2009)\citenamefont {Ward}, \citenamefont {Kassal},\ and\ \citenamefont {Aspuru-Guzik}}]{ward2009preparation}%
  \BibitemOpen
  \bibfield  {author} {\bibinfo {author} {\bibfnamefont {N.~J.}\ \bibnamefont {Ward}}, \bibinfo {author} {\bibfnamefont {I.}~\bibnamefont {Kassal}},\ and\ \bibinfo {author} {\bibfnamefont {A.}~\bibnamefont {Aspuru-Guzik}},\ }\href@noop {} {\bibfield  {journal} {\bibinfo  {journal} {Journal of Chemical Physics}\ }\textbf {\bibinfo {volume} {130}} (\bibinfo {year} {2009})}\BibitemShut {NoStop}%
\bibitem [{\citenamefont {Berry}\ \emph {et~al.}(2018)\citenamefont {Berry}, \citenamefont {Kieferov{\'a}}, \citenamefont {Scherer}, \citenamefont {Sanders}, \citenamefont {Low}, \citenamefont {Wiebe}, \citenamefont {Gidney},\ and\ \citenamefont {Babbush}}]{berry2018improved}%
  \BibitemOpen
  \bibfield  {author} {\bibinfo {author} {\bibfnamefont {D.~W.}\ \bibnamefont {Berry}}, \bibinfo {author} {\bibfnamefont {M.}~\bibnamefont {Kieferov{\'a}}}, \bibinfo {author} {\bibfnamefont {A.}~\bibnamefont {Scherer}}, \bibinfo {author} {\bibfnamefont {Y.~R.}\ \bibnamefont {Sanders}}, \bibinfo {author} {\bibfnamefont {G.~H.}\ \bibnamefont {Low}}, \bibinfo {author} {\bibfnamefont {N.}~\bibnamefont {Wiebe}}, \bibinfo {author} {\bibfnamefont {C.}~\bibnamefont {Gidney}},\ and\ \bibinfo {author} {\bibfnamefont {R.}~\bibnamefont {Babbush}},\ }\href@noop {} {\bibfield  {journal} {\bibinfo  {journal} {npj Quantum Information}\ }\textbf {\bibinfo {volume} {4}},\ \bibinfo {pages} {22} (\bibinfo {year} {2018})}\BibitemShut {NoStop}%
\bibitem [{\citenamefont {Su}\ \emph {et~al.}(2021)\citenamefont {Su}, \citenamefont {Berry}, \citenamefont {Wiebe}, \citenamefont {Rubin},\ and\ \citenamefont {Babbush}}]{su2021fault}%
  \BibitemOpen
  \bibfield  {author} {\bibinfo {author} {\bibfnamefont {Y.}~\bibnamefont {Su}}, \bibinfo {author} {\bibfnamefont {D.~W.}\ \bibnamefont {Berry}}, \bibinfo {author} {\bibfnamefont {N.}~\bibnamefont {Wiebe}}, \bibinfo {author} {\bibfnamefont {N.}~\bibnamefont {Rubin}},\ and\ \bibinfo {author} {\bibfnamefont {R.}~\bibnamefont {Babbush}},\ }\href@noop {} {\bibfield  {journal} {\bibinfo  {journal} {PRX Quantum}\ }\textbf {\bibinfo {volume} {2}},\ \bibinfo {pages} {040332} (\bibinfo {year} {2021})}\BibitemShut {NoStop}%
\bibitem [{\citenamefont {Chan}\ \emph {et~al.}(2023)\citenamefont {Chan}, \citenamefont {Meister}, \citenamefont {Jones}, \citenamefont {Tew},\ and\ \citenamefont {Benjamin}}]{chan2023grid}%
  \BibitemOpen
  \bibfield  {author} {\bibinfo {author} {\bibfnamefont {H.~H.~S.}\ \bibnamefont {Chan}}, \bibinfo {author} {\bibfnamefont {R.}~\bibnamefont {Meister}}, \bibinfo {author} {\bibfnamefont {T.}~\bibnamefont {Jones}}, \bibinfo {author} {\bibfnamefont {D.~P.}\ \bibnamefont {Tew}},\ and\ \bibinfo {author} {\bibfnamefont {S.~C.}\ \bibnamefont {Benjamin}},\ }\href@noop {} {\bibfield  {journal} {\bibinfo  {journal} {Science Advances}\ }\textbf {\bibinfo {volume} {9}},\ \bibinfo {pages} {eabo7484} (\bibinfo {year} {2023})}\BibitemShut {NoStop}%
\bibitem [{\citenamefont {Kosugi}\ \emph {et~al.}(2023)\citenamefont {Kosugi}, \citenamefont {Nishi},\ and\ \citenamefont {Matsushita}}]{kosugi2023first}%
  \BibitemOpen
  \bibfield  {author} {\bibinfo {author} {\bibfnamefont {T.}~\bibnamefont {Kosugi}}, \bibinfo {author} {\bibfnamefont {H.}~\bibnamefont {Nishi}},\ and\ \bibinfo {author} {\bibfnamefont {Y.-i.}\ \bibnamefont {Matsushita}},\ }\href@noop {} {\bibfield  {journal} {\bibinfo  {journal} {Japanese Journal of Applied Physics}\ }\textbf {\bibinfo {volume} {62}},\ \bibinfo {pages} {062004} (\bibinfo {year} {2023})}\BibitemShut {NoStop}%
\bibitem [{\citenamefont {Georges}\ \emph {et~al.}(2024)\citenamefont {Georges}, \citenamefont {Bothe}, \citenamefont {S{\"u}nderhauf}, \citenamefont {Berntson}, \citenamefont {Izs{\'a}k},\ and\ \citenamefont {Ivanov}}]{georges2024quantum}%
  \BibitemOpen
  \bibfield  {author} {\bibinfo {author} {\bibfnamefont {T.~N.}\ \bibnamefont {Georges}}, \bibinfo {author} {\bibfnamefont {M.}~\bibnamefont {Bothe}}, \bibinfo {author} {\bibfnamefont {C.}~\bibnamefont {S{\"u}nderhauf}}, \bibinfo {author} {\bibfnamefont {B.~K.}\ \bibnamefont {Berntson}}, \bibinfo {author} {\bibfnamefont {R.}~\bibnamefont {Izs{\'a}k}},\ and\ \bibinfo {author} {\bibfnamefont {A.~V.}\ \bibnamefont {Ivanov}},\ }\href@noop {} {\bibfield  {journal} {\bibinfo  {journal} {arXiv preprint arXiv:2408.03145}\ } (\bibinfo {year} {2024})}\BibitemShut {NoStop}%
\bibitem [{\citenamefont {Nepomechie}\ and\ \citenamefont {Raveh}(2023)}]{nepomechie2023qudit}%
  \BibitemOpen
  \bibfield  {author} {\bibinfo {author} {\bibfnamefont {R.~I.}\ \bibnamefont {Nepomechie}}\ and\ \bibinfo {author} {\bibfnamefont {D.}~\bibnamefont {Raveh}},\ }\href@noop {} {\bibfield  {journal} {\bibinfo  {journal} {arXiv preprint arXiv:2301.04989}\ } (\bibinfo {year} {2023})}\BibitemShut {NoStop}%
\bibitem [{Note1()}]{Note1}%
  \BibitemOpen
  \bibinfo {note} {We are aware of concurrent work that also gives an algorithm for preparing Dicke states with polylogarithmic depth but uses only polylogarithmically many ancillas \cite {jeffery}. The approaches are very different: while we use sorting networks, the other approach applies a simple sequence of collective rotations and parity measurements. Our approach solves a more general symmetrization problem, while the other approach is simpler, likely performs better in practice, and can perform better when the Hamming weight is lower.}\BibitemShut {Stop}%
\bibitem [{\citenamefont {Gottesman}\ \emph {et~al.}(2012)\citenamefont {Gottesman}, \citenamefont {Jennewein},\ and\ \citenamefont {Croke}}]{gottesman2012longer}%
  \BibitemOpen
  \bibfield  {author} {\bibinfo {author} {\bibfnamefont {D.}~\bibnamefont {Gottesman}}, \bibinfo {author} {\bibfnamefont {T.}~\bibnamefont {Jennewein}},\ and\ \bibinfo {author} {\bibfnamefont {S.}~\bibnamefont {Croke}},\ }\href@noop {} {\bibfield  {journal} {\bibinfo  {journal} {Physical Review Letters}\ }\textbf {\bibinfo {volume} {109}},\ \bibinfo {pages} {070503} (\bibinfo {year} {2012})}\BibitemShut {NoStop}%
\bibitem [{\citenamefont {Khabiboulline}\ \emph {et~al.}(2019{\natexlab{a}})\citenamefont {Khabiboulline}, \citenamefont {Borregaard}, \citenamefont {De~Greve},\ and\ \citenamefont {Lukin}}]{khabiboulline2019quantum}%
  \BibitemOpen
  \bibfield  {author} {\bibinfo {author} {\bibfnamefont {E.~T.}\ \bibnamefont {Khabiboulline}}, \bibinfo {author} {\bibfnamefont {J.}~\bibnamefont {Borregaard}}, \bibinfo {author} {\bibfnamefont {K.}~\bibnamefont {De~Greve}},\ and\ \bibinfo {author} {\bibfnamefont {M.~D.}\ \bibnamefont {Lukin}},\ }\href@noop {} {\bibfield  {journal} {\bibinfo  {journal} {Physical Review A}\ }\textbf {\bibinfo {volume} {100}},\ \bibinfo {pages} {022316} (\bibinfo {year} {2019}{\natexlab{a}})}\BibitemShut {NoStop}%
\bibitem [{\citenamefont {Alonso}\ and\ \citenamefont {Schott}(1996)}]{alonso1996parallel}%
  \BibitemOpen
  \bibfield  {author} {\bibinfo {author} {\bibfnamefont {L.}~\bibnamefont {Alonso}}\ and\ \bibinfo {author} {\bibfnamefont {R.}~\bibnamefont {Schott}},\ }\href@noop {} {\bibfield  {journal} {\bibinfo  {journal} {Theoretical Computer Science}\ }\textbf {\bibinfo {volume} {159}},\ \bibinfo {pages} {15} (\bibinfo {year} {1996})}\BibitemShut {NoStop}%
\bibitem [{\citenamefont {Berry}\ \emph {et~al.}(2023)\citenamefont {Berry}, \citenamefont {Rubin}, \citenamefont {Elnabawy}, \citenamefont {Ahlers}, \citenamefont {DePrince~III}, \citenamefont {Lee}, \citenamefont {Gogolin},\ and\ \citenamefont {Babbush}}]{berry2023quantum}%
  \BibitemOpen
  \bibfield  {author} {\bibinfo {author} {\bibfnamefont {D.~W.}\ \bibnamefont {Berry}}, \bibinfo {author} {\bibfnamefont {N.~C.}\ \bibnamefont {Rubin}}, \bibinfo {author} {\bibfnamefont {A.~O.}\ \bibnamefont {Elnabawy}}, \bibinfo {author} {\bibfnamefont {G.}~\bibnamefont {Ahlers}}, \bibinfo {author} {\bibfnamefont {A.~E.}\ \bibnamefont {DePrince~III}}, \bibinfo {author} {\bibfnamefont {J.}~\bibnamefont {Lee}}, \bibinfo {author} {\bibfnamefont {C.}~\bibnamefont {Gogolin}},\ and\ \bibinfo {author} {\bibfnamefont {R.}~\bibnamefont {Babbush}},\ }\href@noop {} {\bibfield  {journal} {\bibinfo  {journal} {arXiv preprint arXiv:2312.07654}\ } (\bibinfo {year} {2023})}\BibitemShut {NoStop}%
\bibitem [{\citenamefont {Mukhopadhyay}\ \emph {et~al.}(2024)\citenamefont {Mukhopadhyay}, \citenamefont {Stetina},\ and\ \citenamefont {Wiebe}}]{mukhopadhyay2024quantum}%
  \BibitemOpen
  \bibfield  {author} {\bibinfo {author} {\bibfnamefont {P.}~\bibnamefont {Mukhopadhyay}}, \bibinfo {author} {\bibfnamefont {T.~F.}\ \bibnamefont {Stetina}},\ and\ \bibinfo {author} {\bibfnamefont {N.}~\bibnamefont {Wiebe}},\ }\href@noop {} {\bibfield  {journal} {\bibinfo  {journal} {PRX Quantum}\ }\textbf {\bibinfo {volume} {5}},\ \bibinfo {pages} {010345} (\bibinfo {year} {2024})}\BibitemShut {NoStop}%
\bibitem [{\citenamefont {Tong}\ \emph {et~al.}(2022)\citenamefont {Tong}, \citenamefont {Albert}, \citenamefont {McClean}, \citenamefont {Preskill},\ and\ \citenamefont {Su}}]{tong2022provably}%
  \BibitemOpen
  \bibfield  {author} {\bibinfo {author} {\bibfnamefont {Y.}~\bibnamefont {Tong}}, \bibinfo {author} {\bibfnamefont {V.~V.}\ \bibnamefont {Albert}}, \bibinfo {author} {\bibfnamefont {J.~R.}\ \bibnamefont {McClean}}, \bibinfo {author} {\bibfnamefont {J.}~\bibnamefont {Preskill}},\ and\ \bibinfo {author} {\bibfnamefont {Y.}~\bibnamefont {Su}},\ }\href@noop {} {\bibfield  {journal} {\bibinfo  {journal} {Quantum}\ }\textbf {\bibinfo {volume} {6}},\ \bibinfo {pages} {816} (\bibinfo {year} {2022})}\BibitemShut {NoStop}%
\bibitem [{\citenamefont {Kuwahara}\ \emph {et~al.}(2024)\citenamefont {Kuwahara}, \citenamefont {Vu},\ and\ \citenamefont {Saito}}]{kuwahara2024effective}%
  \BibitemOpen
  \bibfield  {author} {\bibinfo {author} {\bibfnamefont {T.}~\bibnamefont {Kuwahara}}, \bibinfo {author} {\bibfnamefont {T.~V.}\ \bibnamefont {Vu}},\ and\ \bibinfo {author} {\bibfnamefont {K.}~\bibnamefont {Saito}},\ }\href@noop {} {\bibfield  {journal} {\bibinfo  {journal} {Nature Communications}\ }\textbf {\bibinfo {volume} {15}},\ \bibinfo {pages} {2520} (\bibinfo {year} {2024})}\BibitemShut {NoStop}%
\bibitem [{\citenamefont {Watson}\ \emph {et~al.}(2023)\citenamefont {Watson}, \citenamefont {Bringewatt}, \citenamefont {Shaw}, \citenamefont {Childs}, \citenamefont {Gorshkov},\ and\ \citenamefont {Davoudi}}]{watson2023quantum}%
  \BibitemOpen
  \bibfield  {author} {\bibinfo {author} {\bibfnamefont {J.~D.}\ \bibnamefont {Watson}}, \bibinfo {author} {\bibfnamefont {J.}~\bibnamefont {Bringewatt}}, \bibinfo {author} {\bibfnamefont {A.~F.}\ \bibnamefont {Shaw}}, \bibinfo {author} {\bibfnamefont {A.~M.}\ \bibnamefont {Childs}}, \bibinfo {author} {\bibfnamefont {A.~V.}\ \bibnamefont {Gorshkov}},\ and\ \bibinfo {author} {\bibfnamefont {Z.}~\bibnamefont {Davoudi}},\ }\href@noop {} {\bibfield  {journal} {\bibinfo  {journal} {arXiv preprint arXiv:2312.05344}\ } (\bibinfo {year} {2023})}\BibitemShut {NoStop}%
\bibitem [{\citenamefont {B{\"a}rtschi}\ and\ \citenamefont {Eidenbenz}(2019)}]{bartschi2019deterministic}%
  \BibitemOpen
  \bibfield  {author} {\bibinfo {author} {\bibfnamefont {A.}~\bibnamefont {B{\"a}rtschi}}\ and\ \bibinfo {author} {\bibfnamefont {S.}~\bibnamefont {Eidenbenz}},\ }in\ \href@noop {} {\emph {\bibinfo {booktitle} {Proceedings of the 22nd International Symposium on Fundamentals of Computation Theory}}}\ (\bibinfo {year} {2019})\ pp.\ \bibinfo {pages} {126--139}\BibitemShut {NoStop}%
\bibitem [{\citenamefont {B{\"a}rtschi}\ and\ \citenamefont {Eidenbenz}(2022)}]{bartschi2022short}%
  \BibitemOpen
  \bibfield  {author} {\bibinfo {author} {\bibfnamefont {A.}~\bibnamefont {B{\"a}rtschi}}\ and\ \bibinfo {author} {\bibfnamefont {S.}~\bibnamefont {Eidenbenz}},\ }in\ \href@noop {} {\emph {\bibinfo {booktitle} {Proceedings of the 2022 IEEE International Conference on Quantum Computing and Engineering}}}\ (\bibinfo {year} {2022})\ pp.\ \bibinfo {pages} {87--96}\BibitemShut {NoStop}%
\bibitem [{\citenamefont {Buhrman}\ \emph {et~al.}(2023)\citenamefont {Buhrman}, \citenamefont {Folkertsma}, \citenamefont {Loff},\ and\ \citenamefont {Neumann}}]{buhrman2023state}%
  \BibitemOpen
  \bibfield  {author} {\bibinfo {author} {\bibfnamefont {H.}~\bibnamefont {Buhrman}}, \bibinfo {author} {\bibfnamefont {M.}~\bibnamefont {Folkertsma}}, \bibinfo {author} {\bibfnamefont {B.}~\bibnamefont {Loff}},\ and\ \bibinfo {author} {\bibfnamefont {N.~M.}\ \bibnamefont {Neumann}},\ }\href@noop {} {\bibfield  {journal} {\bibinfo  {journal} {arXiv preprint arXiv:2307.14840}\ } (\bibinfo {year} {2023})}\BibitemShut {NoStop}%
\bibitem [{\citenamefont {Piroli}\ \emph {et~al.}(2024)\citenamefont {Piroli}, \citenamefont {Styliaris},\ and\ \citenamefont {Cirac}}]{piroli2024approximating}%
  \BibitemOpen
  \bibfield  {author} {\bibinfo {author} {\bibfnamefont {L.}~\bibnamefont {Piroli}}, \bibinfo {author} {\bibfnamefont {G.}~\bibnamefont {Styliaris}},\ and\ \bibinfo {author} {\bibfnamefont {J.~I.}\ \bibnamefont {Cirac}},\ }\href@noop {} {\bibfield  {journal} {\bibinfo  {journal} {arXiv preprint arXiv:2403.07604}\ } (\bibinfo {year} {2024})}\BibitemShut {NoStop}%
\bibitem [{\citenamefont {Dicke}(1954)}]{dicke1954coherence}%
  \BibitemOpen
  \bibfield  {author} {\bibinfo {author} {\bibfnamefont {R.~H.}\ \bibnamefont {Dicke}},\ }\href@noop {} {\bibfield  {journal} {\bibinfo  {journal} {Physical Review}\ }\textbf {\bibinfo {volume} {93}},\ \bibinfo {pages} {99} (\bibinfo {year} {1954})}\BibitemShut {NoStop}%
\bibitem [{\citenamefont {Prevedel}\ \emph {et~al.}(2009)\citenamefont {Prevedel}, \citenamefont {Cronenberg}, \citenamefont {Tame}, \citenamefont {Paternostro}, \citenamefont {Walther}, \citenamefont {Kim},\ and\ \citenamefont {Zeilinger}}]{prevedel2009experimental}%
  \BibitemOpen
  \bibfield  {author} {\bibinfo {author} {\bibfnamefont {R.}~\bibnamefont {Prevedel}}, \bibinfo {author} {\bibfnamefont {G.}~\bibnamefont {Cronenberg}}, \bibinfo {author} {\bibfnamefont {M.~S.}\ \bibnamefont {Tame}}, \bibinfo {author} {\bibfnamefont {M.}~\bibnamefont {Paternostro}}, \bibinfo {author} {\bibfnamefont {P.}~\bibnamefont {Walther}}, \bibinfo {author} {\bibfnamefont {M.-S.}\ \bibnamefont {Kim}},\ and\ \bibinfo {author} {\bibfnamefont {A.}~\bibnamefont {Zeilinger}},\ }\href@noop {} {\bibfield  {journal} {\bibinfo  {journal} {Physical Review Letters}\ }\textbf {\bibinfo {volume} {103}},\ \bibinfo {pages} {020503} (\bibinfo {year} {2009})}\BibitemShut {NoStop}%
\bibitem [{\citenamefont {T{\'o}th}(2012)}]{toth2012multipartite}%
  \BibitemOpen
  \bibfield  {author} {\bibinfo {author} {\bibfnamefont {G.}~\bibnamefont {T{\'o}th}},\ }\href@noop {} {\bibfield  {journal} {\bibinfo  {journal} {Physical Review A}\ }\textbf {\bibinfo {volume} {85}},\ \bibinfo {pages} {022322} (\bibinfo {year} {2012})}\BibitemShut {NoStop}%
\bibitem [{\citenamefont {Ouyang}(2014)}]{ouyang2014permutation}%
  \BibitemOpen
  \bibfield  {author} {\bibinfo {author} {\bibfnamefont {Y.}~\bibnamefont {Ouyang}},\ }\href@noop {} {\bibfield  {journal} {\bibinfo  {journal} {Physical Review A}\ }\textbf {\bibinfo {volume} {90}},\ \bibinfo {pages} {062317} (\bibinfo {year} {2014})}\BibitemShut {NoStop}%
\bibitem [{\citenamefont {Hadfield}\ \emph {et~al.}(2019)\citenamefont {Hadfield}, \citenamefont {Wang}, \citenamefont {O’Gorman}, \citenamefont {Rieffel}, \citenamefont {Venturelli},\ and\ \citenamefont {Biswas}}]{hadfield2019quantum}%
  \BibitemOpen
  \bibfield  {author} {\bibinfo {author} {\bibfnamefont {S.}~\bibnamefont {Hadfield}}, \bibinfo {author} {\bibfnamefont {Z.}~\bibnamefont {Wang}}, \bibinfo {author} {\bibfnamefont {B.}~\bibnamefont {O’Gorman}}, \bibinfo {author} {\bibfnamefont {E.~G.}\ \bibnamefont {Rieffel}}, \bibinfo {author} {\bibfnamefont {D.}~\bibnamefont {Venturelli}},\ and\ \bibinfo {author} {\bibfnamefont {R.}~\bibnamefont {Biswas}},\ }\href@noop {} {\bibfield  {journal} {\bibinfo  {journal} {Algorithms}\ }\textbf {\bibinfo {volume} {12}},\ \bibinfo {pages} {34} (\bibinfo {year} {2019})}\BibitemShut {NoStop}%
\bibitem [{\citenamefont {Lin}\ \emph {et~al.}(2024)\citenamefont {Lin}, \citenamefont {Liu}, \citenamefont {Albert},\ and\ \citenamefont {Gorshkov}}]{lin2024covariantquantumerrorcorrectingcodes}%
  \BibitemOpen
  \bibfield  {author} {\bibinfo {author} {\bibfnamefont {C.-J.}\ \bibnamefont {Lin}}, \bibinfo {author} {\bibfnamefont {Z.-W.}\ \bibnamefont {Liu}}, \bibinfo {author} {\bibfnamefont {V.~V.}\ \bibnamefont {Albert}},\ and\ \bibinfo {author} {\bibfnamefont {A.~V.}\ \bibnamefont {Gorshkov}},\ }\href@noop {} {\bibfield  {journal} {\bibinfo  {journal} {arXiv preprint arXiv:2409.20561}\ } (\bibinfo {year} {2024})}\BibitemShut {NoStop}%
\bibitem [{\citenamefont {Batcher}(1968)}]{batcher1968sorting}%
  \BibitemOpen
  \bibfield  {author} {\bibinfo {author} {\bibfnamefont {K.~E.}\ \bibnamefont {Batcher}},\ }in\ \href@noop {} {\emph {\bibinfo {booktitle} {Proceedings of the April 30--May 2, 1968, Spring Joint Computer Conference}}}\ (\bibinfo {year} {1968})\ pp.\ \bibinfo {pages} {307--314}\BibitemShut {NoStop}%
\bibitem [{\citenamefont {Stone}(1971)}]{stone1971parallel}%
  \BibitemOpen
  \bibfield  {author} {\bibinfo {author} {\bibfnamefont {H.~S.}\ \bibnamefont {Stone}},\ }\href@noop {} {\bibfield  {journal} {\bibinfo  {journal} {IEEE Transactions on Computers}\ }\textbf {\bibinfo {volume} {100}},\ \bibinfo {pages} {153} (\bibinfo {year} {1971})}\BibitemShut {NoStop}%
\bibitem [{\citenamefont {Ajtai}\ \emph {et~al.}(1983)\citenamefont {Ajtai}, \citenamefont {Koml{\'o}s},\ and\ \citenamefont {Szemer{\'e}di}}]{ajtai19830}%
  \BibitemOpen
  \bibfield  {author} {\bibinfo {author} {\bibfnamefont {M.}~\bibnamefont {Ajtai}}, \bibinfo {author} {\bibfnamefont {J.}~\bibnamefont {Koml{\'o}s}},\ and\ \bibinfo {author} {\bibfnamefont {E.}~\bibnamefont {Szemer{\'e}di}},\ }in\ \href@noop {} {\emph {\bibinfo {booktitle} {Proceedings of the 15th Annual ACM Symposium on Theory of Computing}}}\ (\bibinfo {year} {1983})\ pp.\ \bibinfo {pages} {1--9}\BibitemShut {NoStop}%
\bibitem [{\citenamefont {Hillis}\ and\ \citenamefont {Steele~Jr.}(1986)}]{hillis1986data}%
  \BibitemOpen
  \bibfield  {author} {\bibinfo {author} {\bibfnamefont {W.~D.}\ \bibnamefont {Hillis}}\ and\ \bibinfo {author} {\bibfnamefont {G.~L.}\ \bibnamefont {Steele~Jr.}},\ }\href@noop {} {\bibfield  {journal} {\bibinfo  {journal} {Communications of the ACM}\ }\textbf {\bibinfo {volume} {29}},\ \bibinfo {pages} {1170} (\bibinfo {year} {1986})}\BibitemShut {NoStop}%
\bibitem [{\citenamefont {Beckers}(1993)}]{beckers1993adaptive}%
  \BibitemOpen
  \bibfield  {author} {\bibinfo {author} {\bibfnamefont {J.~M.}\ \bibnamefont {Beckers}},\ }\href@noop {} {\bibfield  {journal} {\bibinfo  {journal} {Annual Review of Astronomy and Astrophysics}\ }\textbf {\bibinfo {volume} {31}},\ \bibinfo {pages} {13} (\bibinfo {year} {1993})}\BibitemShut {NoStop}%
\bibitem [{\citenamefont {Khabiboulline}\ \emph {et~al.}(2019{\natexlab{b}})\citenamefont {Khabiboulline}, \citenamefont {Borregaard}, \citenamefont {De~Greve},\ and\ \citenamefont {Lukin}}]{khabiboulline2019opticalprl}%
  \BibitemOpen
  \bibfield  {author} {\bibinfo {author} {\bibfnamefont {E.~T.}\ \bibnamefont {Khabiboulline}}, \bibinfo {author} {\bibfnamefont {J.}~\bibnamefont {Borregaard}}, \bibinfo {author} {\bibfnamefont {K.}~\bibnamefont {De~Greve}},\ and\ \bibinfo {author} {\bibfnamefont {M.~D.}\ \bibnamefont {Lukin}},\ }\href@noop {} {\bibfield  {journal} {\bibinfo  {journal} {Physical Review Letters}\ }\textbf {\bibinfo {volume} {123}},\ \bibinfo {pages} {070504} (\bibinfo {year} {2019}{\natexlab{b}})}\BibitemShut {NoStop}%
\bibitem [{\citenamefont {Sajjad}\ \emph {et~al.}(2024)\citenamefont {Sajjad}, \citenamefont {Grace},\ and\ \citenamefont {Guha}}]{sajjad2024quantum}%
  \BibitemOpen
  \bibfield  {author} {\bibinfo {author} {\bibfnamefont {A.}~\bibnamefont {Sajjad}}, \bibinfo {author} {\bibfnamefont {M.~R.}\ \bibnamefont {Grace}},\ and\ \bibinfo {author} {\bibfnamefont {S.}~\bibnamefont {Guha}},\ }\href@noop {} {\bibfield  {journal} {\bibinfo  {journal} {Physical Review Research}\ }\textbf {\bibinfo {volume} {6}},\ \bibinfo {pages} {013212} (\bibinfo {year} {2024})}\BibitemShut {NoStop}%
\bibitem [{\citenamefont {Ambainis}\ \emph {et~al.}(2011)\citenamefont {Ambainis}, \citenamefont {Magnin}, \citenamefont {Roetteler},\ and\ \citenamefont {Roland}}]{ambainis2011symmetry}%
  \BibitemOpen
  \bibfield  {author} {\bibinfo {author} {\bibfnamefont {A.}~\bibnamefont {Ambainis}}, \bibinfo {author} {\bibfnamefont {L.}~\bibnamefont {Magnin}}, \bibinfo {author} {\bibfnamefont {M.}~\bibnamefont {Roetteler}},\ and\ \bibinfo {author} {\bibfnamefont {J.}~\bibnamefont {Roland}},\ }in\ \href@noop {} {\emph {\bibinfo {booktitle} {Proceedings of the 26th Annual Conference on Computational Complexity}}}\ (\bibinfo {year} {2011})\ pp.\ \bibinfo {pages} {167--177}\BibitemShut {NoStop}%
\bibitem [{\citenamefont {Yu}\ \emph {et~al.}(2024)\citenamefont {Yu}, \citenamefont {Muleady}, \citenamefont {Wang}, \citenamefont {Schine}, \citenamefont {Gorshkov},\ and\ \citenamefont {Childs}}]{jeffery}%
  \BibitemOpen
  \bibfield  {author} {\bibinfo {author} {\bibfnamefont {J.}~\bibnamefont {Yu}}, \bibinfo {author} {\bibfnamefont {S.}~\bibnamefont {Muleady}}, \bibinfo {author} {\bibfnamefont {Y.-X.}\ \bibnamefont {Wang}}, \bibinfo {author} {\bibfnamefont {N.}~\bibnamefont {Schine}}, \bibinfo {author} {\bibfnamefont {A.~V.}\ \bibnamefont {Gorshkov}},\ and\ \bibinfo {author} {\bibfnamefont {A.~M.}\ \bibnamefont {Childs}},\ }\href@noop {} {\bibinfo {title} {Efficient preparation of {D}icke states}} (\bibinfo {year} {2024})\BibitemShut {NoStop}%
\end{thebibliography}%

\end{document}